\newtheorem{theorem}{Theorem}
\newtheorem{lemma}{Lemma}
\newtheorem{itassumption}{Assumption}
\newtheorem{itassumptionr}{Assumption (rev.)}
\newtheorem{itremark}{Remark}
\newtheorem{itproposition}{Proposition}
\newtheorem{itresult}{Result}
\newtheorem{itproblem}{Problem}
\newcommand{\subscr}[2]{#1_{\textup{#2}}}
\newcommand{\eps}{\varepsilon}
\newcommand{\be}{\begin{equation}}
\newcommand{\ee}{\end{equation}}
\newcommand{\ba}{\begin{array}}
\newcommand{\ea}{\end{array}}
\newcommand{\E}{\mathcal{E}}	
\newcommand{\1}{{\bf 1}}
\newcommand{\e}{{\rm e}}
\begin{document}
\title{Stochastic String Stability of Vehicle Platoons via Cooperative Adaptive Cruise Control with Lossy Communication
\thanks{
The authors would like to thank Elham Semsar-Kazerooni for insightful discussions on the topics of this paper.
}}
\author{Francesco Acciani,%
\IEEEcompsocitemizethanks{F.\ Acciani, P.\ Frasca, G.\ Heijenk  and A.\ Stoorvogel are with the Faculty of Electrical Engineering, Mathematics and Computer Science, University of Twente, Enschede, Netherlands.} 
Paolo Frasca, 
\IEEEcompsocitemizethanks{ P.\ Frasca is with Univ.\ Grenoble Alpes, CNRS, Inria, Grenoble INP, GIPSA-lab, Grenoble, France. 
E-mails: paolo.frasca@gipsa-lab.fr.}%
Geert Heijenk, 
Anton Stoorvogel
}


\maketitle

\begin{abstract}
This paper is about obtaining stable vehicle platooning by using Cooperative Adaptive Cruise Control when the communication is unreliable and suffers from message losses. We model communication losses as independent random events and we propose an original design for the cooperative controller, which mitigates the effect of losses. This objective is obtained by a switching controller that has a twofold objective: on the one hand, it promotes both plant stability and string stability of the average error dynamics by an $\mathcal{H}_\infty$ approach, and on the other hand it minimizes the variance around the average. We show by simulations that the proposed controller is able to compensate even for high probability of losses. 
\end{abstract}

\IEEEdisplaynontitleabstractindextext

%
\IEEEpeerreviewmaketitle

\ifCLASSOPTIONcompsoc
\IEEEraisesectionheading{\section{Introduction}\label{sec:introduction}}
\else

\section{Introduction}
\label{sec:introduction}
\fi

A platoon is a group of vehicles that move close together at the same speed: automated vehicle platooning has been heralded since the eighties as an enabler of effective road usage. 
Crucial to the effectiveness of platoons is their ability to remain {\em string stable}, that is, to dampen disturbances that may affect the motion of vehicles.

Adaptive Cruise Control systems allow vehicles to adapt their dynamics to their surroundings by taking measurements of relative distances and speeds. ACC systems, which are now becoming widespread in the market, can be used to create platoons that however may not be string stable in many situations.
Cooperative Adaptive Cruise Control provides an effective way to stabilize platoons, with performance that is superior to ACC without cooperation: besides inter-vehicle relative measurements of distance and speed, cooperation requires that vehicles communicate to exchange additional internal variables, such as their current control inputs.  

Cooperation in CACC thus hinges on the ability of the vehicles to communicate effectively and consequently is prone to disruption if communication is problematic, for instance if it is severely affected by delays or packet losses.
These vulnerabilities have been observed by several researchers, who have also proposed both robustness analyses and designs to mitigate the disruptions~\cite{Liu:2001aa,Lei:2011aa,Ploeg:2015,Oncu:2014aa,Qin:2017aa}.

\paragraph{Contribution}
Our work concentrates on the effect of message losses, modeled as stochastic independent events, and on the design of a cooperative controller that mitigates their effect.
We frame the issue of robustness to packet losses as the stability of a string of stochastic systems. 
Based on this reflection, we design a distributed controller to guarantee that the average behavior of the platoon is string stable and that the variance around the average is minimized. This twofold requirement is meant to make disturbances be dampened along most of the realized trajectories of the stochastic system. This design is achieved by means of $\mathcal{H}_\infty$ control methods in the state space that are applied on the expected dynamics. Despite the large literature on communication and CACC, our work seems the first to incorporate the stochastic nature of losses in the design of a string-stabilizing controller that has the purpose of mitigating their effects. 
The obtained distributed controller is tested in simulation showing its ability to compensate for high probability of losses while keeping the platoon tightly together.

\paragraph{Literature review} 

The importance of communication for string stabilization is supported by evidence from work spanning across at least two decades~\cite{Liu:2001aa,Arem:2006aa,Lei:2011aa,Pates:2017aa}. When CACC is deployed by unreliable communication, researchers have observed that ``as the probability of data loss increases, the behavior of the platoon becomes unacceptable [...], unless the time headway constant is also increased''~\cite{Vargas:2018aa}. 

In this paper we make well-accepted assumptions about the communication between the vehicles. Messages are sent at uniformly-spaced sample times and failure (loss) events are stochastic, in accordance with most literature on communication networks. 
On this matter, it is worth noting that even though string stability notions have been studied for half a century~\cite{Peppard:1974,Pant:2002aa,Swaroop:1996aa}, relatively little attention has been devoted to stochasticity in this context. 
Deterministic models are more popular in the control community~\cite{Harfouch:2018aa,Molnar:2018,merco:2019}, but \cite{Qin:2017aa} has performed string stability analysis for mean and covariance dynamics and \cite{Qin:2015} has proposed stochastic string stability notions for platoons with random delays. Recently, event-driven communication has also been proposed in this context~\cite{Dolk:2017aa}.  

Regarding our design tools, we would like to mention that $\mathcal{H}_\infty$ design is also used in~\cite{Ploeg:2014bb,Zheng:2018aa} for platooning applications and in \cite{Li:2010ab,Li:2018aa} for general stochastic networked control systems. Another $\mathcal{H}_\infty$ design, which includes packet losses, can be found in~\cite{Seiler:2005}, while early designs for mean square stability with packet losses can be found in~\cite{Seiler:2001}.

Finally, in a recent paper \cite{Acciani:2018ab}, we approached the issue of string stability on lossy networks by a frequency-domain design, using a dynamic controller together with an unknown input observer that reconstructs the missing inputs. Related references include \cite{Naus:2010} on frequency-domain designs and \cite{Molnar:2018} on using predictors.

\paragraph{Paper structure} The rest of the paper is organized as follows. In Section~\ref{sect:models} we describe the full dynamical model of the platoon. In Section~\ref{sect:string} we discuss the suitable notion of string stability that we set as design objective. In Section~\ref{sect:control-design} we describe our control architecture with all its components. In Section~\ref{sect:simulations} we test the performance of our controller in simulation, showing its effectiveness. We conclude with some reflections and perspectives in Section~\ref{sect:outro}.

\section{Vehicle and platoon models}\label{sect:models}
This section describes our model for the platoon, including both the motion model of the vehicles, the error dynamics with respect to the control objective, and the assumptions about the available measurements and communications. 

\subsection{Vehicle model and spacing policy}
We take a well-established~\cite{Sheikholeslam:1990aa} dynamical model of the single vehicle, which takes into account only its longitudinal motion. We let scalar $q_i(t)$ be the position of vehicle $i$, $v_i(t)$ its speed and $a_i(t)$ its acceleration and we assume the dynamics	
	\begin{align}
		\dot{a}_i(t) = -\frac{1}{\tau}a_i(t) + \frac{1}{\tau}u_i(t-\phi),  \label{eq:dynamics}
	\end{align}
where $\tau>0$ is the vehicle time constant, and $\phi>0$ the internal delay of the system. As usual, $u_i(t)$ is the input for our system. Since the constants do not depend on $i$, the platoon is homogeneous: we mention {\it en passant} that heterogeneity in platoons is s relevant issue and an active topic of research~\cite{Harfouch:2018aa}.

The objective of the controller is to maintain the distance between vehicles at a certain reference: we adopt here a speed-dependent time spacing policy, according to~\cite{Swaroop:1994aa}. 
The desired distance between vehicle $i$ and $i-1$ is then defined as
	\begin{align*}
	d_{r,i}(t) = R_i + h v_i(t)
	\end{align*}
	where the positive constant $h$ is a time-headway (lower values for $h$ represent shorter distances between vehicles) and $R_i$ is the standstill desired distance. In the following of this paper $R_i$ will be neglected, without loss of generality: it is always possible to find a coordinate transformation equivalent to the choice $R_i = 0$. 
%
We can now define the error for vehicle $i$ as
	\begin{align}
	e_i(t) = d_i(t) - d_{r,i}(t) = q_{i-1}(t)-q_i(t) - hv_i(t)\label{eq:error} 
	\end{align}
	where $d_i(t) = q_{i-1}(t)-q_i(t)$ is the relative distance between vehicles $i$ and $i-1$. 
Finally, we observe that the vector state
	\begin{align*}
	x(t) = \left[\begin{array}{c}e_i(t)\\ \dot{e}_i(t)\\ \ddot{e}_i(t) + \frac{h}{\tau}u_i(t-\phi)\end{array}\right]
	\end{align*}
follows the closed dynamics
	\begin{align}\label{eq:error-dynamics}
	\dot{x}(t) =  
	\left[ \begin{matrix}
	0 	&1 		&	0\\
	0 	&0		& 	1\\
	0	&0		&-\frac{1}{\tau}\\
	\end{matrix}\right]x(t) +  
	\left[\begin{matrix}
	0\\
	-\frac{h}{\tau}\\
	\frac{h-\tau}{\tau^2}\end{matrix}\right]u_i(t-\phi) +
	\left[\begin{matrix}
	0\\
	0\\
	\frac{1}{\tau}\end{matrix}\right]u_{i-1}(t-\phi),		
	\end{align}
	which is consistent with~\eqref{eq:dynamics}-\eqref{eq:error} and which we shall refer to as the {\em error dynamics}.

\subsection{Measurements and communication}
We assume that every vehicle  is able to measure the relative distance and relative speed with respect to the preceding vehicle (say, by a radar sensor), as well as its own absolute speed and acceleration. These measurements permit to reconstruct the first and second components of the state, i.e. $e_i(t) = d_i(t) - hv_i(t)$ and $\dot e_i(t) = v_{i-1}(t) - v_i(t) - ha_i(t)$. Instead, there is no simple way to measure relative acceleration or jerk: therefore we can not measure the third component of the state. 
Therefore, we define the \emph{measured} output $y(t)$ as:
	\begin{align*}
	y_i(t) = Cx_i(t-\psi)+w(t)
	\end{align*}
	where $C = \left[\begin{matrix}1 &0  &0\\ 0 &1 &0\end{matrix}\right]$, positive scalar $\psi$ is a measurement delay 
	and $w(t)$ is measurement noise.

Since we are looking for a cooperative controller, we assume that the vehicles are also able to communicate. 
We thus define a \emph{communicated} output, which is the input of the previous vehicle: this is a noise-free quantity which however suffers from losses and transmission delay:
	\begin{align*}
	{\subscr{y}{comm}}_i(t) = f(u_{i-1}(t-\theta))
	\end{align*}
	where $f\left(u_{i-1}(t-\theta)\right)$ models the network unreliability:
	\begin{align*}
	f\left(u_{i-1}(t-\theta)\right) = \left\{\begin{array}{cl}
	u_{i-1}(t-\theta) &\quad\text{if communication available}\\
	0			  &\quad\text{otherwise}	\end{array}\right.
	\end{align*}
%
A scheme representing which quantities are used by the vehicles can be found in Fig.~\ref{fig:connection_car}: typical values for the parameters of the continuous time systems are \cite{Ploeg:2011aa}:
	\begin{align*}
	\tau = 0.1s, \quad \phi = 0.2s, \quad \theta = 0.02s, \quad \psi = 0.05s.
	\end{align*}

	\begin{figure*}
    \resizebox{\textwidth}{!}{
		\begin{tikzpicture}
		\tikzstyle{box} = [rectangle, draw = black, minimum width = 1cm, minimum height = 1cm, node distance = 1.4cm, rounded corners]
		
		\node [box] (enginei)[right] at (5,2) {Engine};
		\node [box] (controlleri)[right] at (2,2) {Controller};
		\node (cari) at (0,0)[opacity=.2, above right] {\reflectbox{\includegraphics[height=3.8cm]{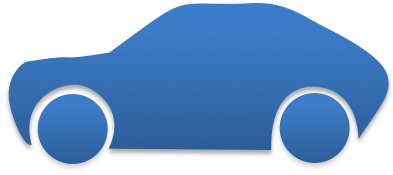}}};
		\node (qi) [above left = .25 cm of controlleri] {\small$q_{i-1}-q_i$};
		\node (vi) [left = .25 cm of controlleri] {\small$v_{i-1}-v_i$};
		\node (ai) [below left = .25 cm of controlleri] {\small$a_i$};
		\node (ui) [right = .5cm of controlleri, above] {\small$u_i$};
		\draw [->] (qi) -- (controlleri.north west);
		\draw [->] (vi) -- (controlleri.west);
		\draw [->] (ai) -- (controlleri.south west);
		\draw [->] (controlleri.east) -- (enginei.west);
		
		\node [box] (engine)[right] at (15,2) {Engine};
		\node [box] (controller)[right] at (12,2) {Controller};
		\node (car) at (10,0)[opacity=.2, above right] {\reflectbox{\includegraphics[height=3.8cm]{pics/car.png}}};
		\node (q) [above left = .25 cm of controller] {\small$q_{i-2}-q_{i-1}$};
		\node (v) [left = .25 cm of controller] {\small$v_{i-2}-v_{i-1}$};
		\node (a) [below left = .25 cm of controller] {\small$a_{i-1}$};
		\node (u) [right = .5cm of controller, above] {\small$u_{i-1}$};
		\draw [->] (q) -- (controller.north west);
		\draw [->] (v) -- (controller.west);
		\draw [->] (a) -- (controller.south west);
		\draw [->] (controller.east) -- (engine.west);
		
		\node (right dummy) [above of= u] {};
		\node (top dummy) [above of = controlleri] {};
		\node (left dummy) [above of = top dummy] {};
		\draw [->](u) -- (right dummy.south) -- (top dummy.north) -- (controlleri.north);
		
		\node (left dummy 2) [above = 2cm of ui]{};
		\node (left left dummy) [left =3 cm of left dummy 2]{};
		\node (right dummy 2) [above = 2cm of controller]{};
		\node (right right dummy) [right =3 cm of right dummy 2]{};
		\draw [dashed, ->] (ui) -- (left dummy 2.south) -- (left left dummy.south); 
		\draw [dashed,<-] (controller) -- (right dummy 2.south) -- (right right dummy.south);
		\node (ui1) [above of =controller,below right, opacity = .5] {$u_{i-2}$};
		\end{tikzpicture}
	}
		\caption{Diagram of the platoon -- detail of the interconnection between two vehicles.\label{fig:connection_car}}
	\end{figure*}

\subsection{Communication model and discrete-time dynamics}
After modelling the continuous-time dynamics, we are going to convert it to a discrete-time one by choosing a suitable inter-sampling time $T_s$.  Indeed, the discrete-time framework is more suitable to accommodate for the communication between vehicles, which is inherently a discrete time phenomenon, including the stochastic losses.
To model the communication, we assume that transmissions occur at the sampling times and that, upon each transmission, each vehicle is able to receive the information that is sent by the preceding one with probability $1-p$ (thus $p$ is the probability for each sent message to be lost). Transmissions are assumed to be time and space independent, therefore a loss happening at some time for one vehicle does not influence future or current losses for any other vehicle in the platoon.
Furthermore, discrete time allows us to easily incorporate delays in our analysis, so long as we assume that delays are multiples of the sampling time.

Based on the above considerations, we choose a inter-sampling time $T_s=0.01s$ and 
 $d = \frac{\phi}{T_s}$, $m = \frac{\psi}{T_s}$, and $r = \frac{\theta}{T_s}$.
We thus obtain the model 
	\begin{align}
	x_i(k+1) = Ax_i(k) + Bu_i(k-d) +Eu_{i-1}(k-d) \label{eq:ss_general}
	\end{align}
	where
	\begin{align*}
	A &= \left[ \begin{matrix}
	1 	&T_s 	&	\tau T_s - \tau^2\left(1-e^\frac{-T_s}{\tau} \right)\\
	0 	& 1 		& 	\tau\left(1-e^\frac{-T_s}{\tau} \right)\\
	0	&0		&e	^\frac{-T_s}{\tau}
	\end{matrix}\right] \\
	B &= \left[\begin{matrix}
	-\frac{T_s^2}{2} + T_s(\tau-h)-\tau(\tau-h)\left(1-e^{-\frac{T_s}{\tau}}\right)\\
	-T_s + (\tau-h)\left(1-e^{-\frac{-Ts}{\tau}}\right)\\
	-\frac{\tau-h}{\tau}\left(1-e^{-\frac{T_s}{\tau}}\right)
	\end{matrix}\right]\\
	E &= \left[\begin{matrix}
	\frac{T_s^2}{2} - T_s\tau + \tau^2\left(1-e^\frac{-T_s}{\tau} \right) \\
	T_s - \tau\left(1-e^\frac{-T_s}{\tau} \right)		\\
	1-e^\frac{-T_s}{\tau}	\\
	\end{matrix}\right]
	\end{align*}
and by a slight notational abuse the discrete-time variables are denoted by the same letters as the corresponding continuous-time variables.
	The 
	full model then reads
	\begin{align}\label{eq:ss_complete}
	x_i(k+1) 	&= Ax_i(k) + Bu_i(k-d) +Eu_{i-1}(k-d) 
	\nonumber\\
	y_i(k) 	&= Cx_i(k-m) 
	\\
	{\subscr{y}{comm}}_i(k) 	&=	u_{i-1}(k-r) \, \delta_i(k)\nonumber
	\end{align}
	where $\delta_i(k)$ is a Bernoulli random variable with mean $1-p$.
	For controller design purposes, we shall disregard 
	measurement delays (which can be incorporated in the larger input delay) and transmission delays (which we assume to be small). We thus effectively work on 	
	\begin{align}\label{eq:ss_for-design}
	x_i(k+1) 	&= Ax_i(k) + Bu_i(k-d) +Eu_{i-1}(k-d)\nonumber\\
	y_i(k) 	&= Cx_i(k) \\
	{\subscr{y}{comm}}_i(k) 	&=	u_{i-1}(k) \, \delta_i(k)\nonumber
	\end{align}   The general model~\eqref{eq:ss_complete}, however, shall be used to test the controller in simulation.

\section{String stability and control objectives}\label{sect:string}
In order to make every vehicle in the platoon smoothly follow the preceding one, our control objective is twofold: stabilizing the error~\eqref{eq:error} to zero and ensuring a {\em string stability} property. This latter notion refers to the uniform boundedness of the states or, equivalently, to the dampening of disturbances along the string of vehicles. Indeed, disturbance amplification is not only a risk for a safe operation of the platoon, but it also compromises traffic flow stability and throughput \cite{Oncu:2014ab}. If the metric to measure disturbances is $\mathcal{L}_2$, then the resulting notion is the so-called $\mathcal{L}_2$-string stability that requires 
\begin{align}
\frac{\Vert u_i\Vert_{\mathcal{L}_2}}{\Vert u_{i-1}\Vert_{\mathcal{L}_2}}<1. \label{eq:L2norm}
\end{align}
For a proper definition of a feasible and useful control objective, we make three choices that correspond to three crucial points to account for. Firstly, we need to ensure {\em both plant stability} (that is, stability of~\eqref{eq:error-dynamics}) for each vehicle {\em and string stability} for the whole platoon. To this purpose, 
we consider a combination of input and error as $$ z_i(k) = [\varepsilon e_i(k), r\, u_i(k)]^\top\!\!$$ (with suitable positive constants $\varepsilon, r$) and replace \eqref{eq:L2norm} by
\begin{align}
\frac{\Vert z_i\Vert_{\mathcal{L}_2}}{\Vert u_{i-1}\Vert_{\mathcal{L}_2}}<1, \label{eq:z-stability}
\end{align}
which --contrarily to \eqref{eq:L2norm}-- promotes both plant and string stability. 
Secondly, instead of imposing the ratio in~\eqref{eq:z-stability} to be smaller than one, we aim to make it as small as possible: 
\begin{align}
\min \frac{\Vert z_i\Vert_{\mathcal{L}_2}}{\Vert u_{i-1}\Vert_{\mathcal{L}_2}}. \label{eq:H-inf}
\end{align}
This formulation can take advantage of the solid literature on $\mathcal{H}_\infty$ control design devoted to minimizing this cost.

Thirdly and most importantly, our system is {\em stochastic} in nature and therefore it requires us to adapt the deterministic definition of string stability that we have given above.
Ideally, one would like that disturbances be attenuated along every or almost every trajectories. However, such a requirement would be too restrictive for performance in terms of the necessary headway $h$. On the contrary, requiring disturbance attenuation in expectation only would allow for a small headway $h$ but would actually be too weak a requirement, since trajectories be free to significantly deviate from the average.
In view of this robustness/performance trade-off in the definition of string stability in the stochastic setting, in our work we aim at ensuring an acceptable behavior of the stochastic string with good performance by a two-step approach: we impose string stability of the {\em expected trajectory} and we {\em minimize the variance} of the trajectories around their expectation.

\section{Control design}\label{sect:control-design}

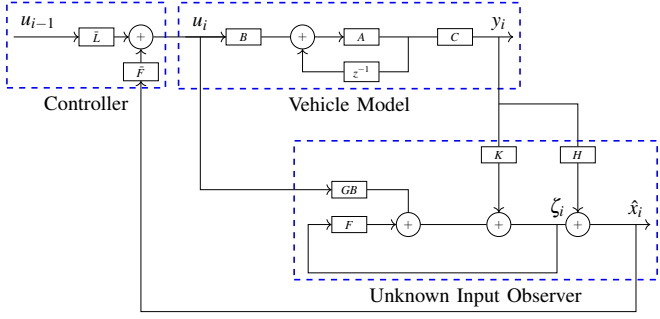
\begin{figure}
\resizebox{\columnwidth}{!}{
\begin{tikzpicture}
	\tikzstyle{tf} = [rectangle, draw = black, minimum width= 1cm, minimum height = 1/1.618 cm, scale = .6]
	\tikzstyle{sum} = [draw=black, shape=circle, minimum width = .2cm, minimum height = .2cm,scale = 0.6]
	\tikzstyle{signal} = [above]
	
	\node (A) [tf] at (0,0) {$A$};
	\node (z) [tf, below of=A] {$z^{-1}$};
	\node (dummyx) [right =0.5cm of A, above] {};
	\node (C) [tf, right =1cm of A] {$C$};
	\node (sum) [sum, left=0.5 cm of A]{$+$};
	\node (B) [tf, left = 0.5cm of sum] {$B$};
	\node (u) [signal, left of = B, above right] {$u_i$};
	\node (y) [signal, right of = C, above left] {$y_i$};
	
	\draw [->] (dummyx.south) |- (z) -| (sum.south);
	\draw [->] (A)--(C)--(y.south east);
	\draw [->] (u.south west) -- (B) -- (sum) -- (A);

	\node (rettangolo) [draw,blue,thick, dashed, fit=(u) (z) (u) (y),label=below:Vehicle Model]{};
	
	\node (dummy) [below =1 cm of rettangolo] {};

	\node (GB) [tf,below of = dummy] {$GB$};
	\node (F) [tf,below of = GB] {$F$};
	\node (sum2) [sum, right = 0.5cm of  F] {$+$};
	\node (K) [tf] at (dummy-|y){$K$};
	\node (H) [tf] [right = 0.75cm of K] {$H$};
	\node (topH) [above of = H] {};
	\node (boty) at (y|-topH){};
	\node (sum3) [sum] at (F-|y) {$+$};
	\node (zeta) [signal, right of = sum3, above] {$\zeta_i$};
	\node (sum4) [sum] at (sum3-|H){$+$};
	\node (xhat) [signal, right of=sum4, above] {$\hat{x}_i$};
	\node (dummyF) [below left of = F]{};
	\draw [->] (y.south) --(K)-- (sum3);
	\draw [->] (boty.south) -| (H) -- (sum4);
	\draw [->] (GB)-|(sum2)--(sum3)--(sum4)--(xhat.south east);
	\draw [->] (F)--(sum2);
	\draw [->] (u) |- (GB);
	\draw [->] (zeta) |- (dummyF.south) |- (F);
	
	\node (rettangolo2) [draw,blue,thick, dashed, fit=(dummy) (dummyF) (dummyF) (xhat),label=below:Unknown Input Observer]{};

	\node (sum5) [sum, left =1.25 cm of B] {$+$};
	\node (F2) [tf, left =0.25 of sum5] {$\bar L$};
	\node (F1) [tf, below  of=  sum5]{$\bar F$};
	\node (uprec) [signal, left of =F2, above] {$u_{i-1}$};
	
	\draw [->] (F1.north)-|(sum5);
	\draw [->] (uprec.south west) -- (F2);
	\draw [->] (F2) -- (sum5);
	\draw [->] (sum5) --(B);
	
	\node (dummyxhat) [below =1.25 cm of xhat] {};
	\node (dummyfeedback) at (F1|-dummyxhat){};;
	\draw [->] (xhat.south) -- (dummyxhat.south) -- (dummyfeedback.south) -- (F1);
	
	\node (rettangolo2) [draw,blue,thick, dashed, fit=(uprec) (F1) (uprec) (F1),label=below:Controller]{};
\end{tikzpicture}
}
\caption{Control architecture with $\mathcal{H}_\infty$ controller and state observer, that is, $u_i(k)=\bar F \hat x_i(k) + \bar L u_{i-1}(k)$). This controller is applied to the average dynamics. The full control architecture includes the switching logic and the lifting block.}\label{fig:control-structure}	
\end{figure}

The design of our cooperative controller is done through several phases, which roughly correspond to the main components of the control architecture that is summarized in Fig.~\ref{fig:control-structure}.
The vehicle is fed by $u_i(k)$, computed by the controller. The controller provides $u_i(k)$ by using the input of the preceding vehicle $u_{i-1}(k)$ when available, the estimate of the state from an unknown input observer, and the delayed samples of the inputs $u_i(k)$ and $u_{i-1}(k)$. More specifically, to model the effects of delays in the vehicle, we study a lifted system, which increases the size of the state. However, it is possible to design the unknown input observer based on the small, delay-free system, and enrich the output of the unknown input observer with previous measurements of the inputs $u_i(k)$ and $u_{i-1}(k)$ when available, to fully reconstruct the lifted state, which is controller's input.
To keep the synthesis and the design of the controller simple, we present all the ingredients for the controller synthesis without taking delays into account, while the actual computation of the controller takes into account the lifted, high-dimensional, system.
We shall first design a cooperative controller for the generic vehicle $i$ in the \emph{full information} case, that is, assuming perfect knowledge of $x_i(k)$ (state-feedback) and $u_{i-1}(k)$ (ideal communication without losses). This preliminary controller, in the form
	\begin{align}\label{eq:non-switching_control_law}
u_i(k)=\bar F x_i(k) + \bar L u_{i-1}(k),\end{align}  has the objective of ensuring both closed-loop stability of the error dynamics and string stability of the platoon. Next, we take into account the communication limitations and use $u_{i-1}(k)$ only when it is available; the control law therefore becomes the 
following switching one:
	\begin{align}\label{eq:control_law}
	u_i(k) = \left\{
	\begin{array}{ll}
	F_1{x}_i(k) + L u_{i-1}(k) & \quad\text{if } u_{i-1}(k) \text{ available} \\
	F_2{x}_i(k)					& \quad\text{otherwise}
	\end{array}
	\right.
	\end{align}
We adapt the controller to the stochastic nature of our system in two steps: (i) we study the average or {\em expected} dynamics of the system and design the controller in such as way that the expected behavior matches the ideal behavior (without losses); (ii) we include the criterion of minimizing the {\em variance} to account for the dispersion of the actual trajectories.

Finally, we design an {\em unknown input observer} to produce an estimate of the state $\hat{x}_i(k)$ that can be used for state-feedback and we incorporate delays by 
applying the aforementioned steps to 
a suitably {\em lifted} system. All these steps are suitably detailed in the following subsections.

\subsection{Lossless communication and state-feedback}\label{sect:full-info-controller}
To design the controller, we study the single vehicle dynamics in the lossless case. From now on, we drop the subscripts and we use $\xi(k)$ to denote $u_i(k)$, the input of the $i^{th}$ vehicle, and $\nu(k)$ to denote $u_{i-1}(k)$, the input of the previous vehicle $i-1$.
	The model of the single vehicle that we use to design a controller is the following:
	\begin{align}\label{eq:system_control}
	\left\{
	\begin{array}{rl}
		x(k+1) 	&= A\,x(k) + B\xi(k) + E\nu(k) \\
		y(k)	&= C\, x(k) \\
		z(k) 	&= C_zx(k) + D_\xi \xi(k)
	\end{array}\right.
	\end{align}
The variable $y(k)$ is the measured output defined by \begin{align*}
		C = \left[\begin{array}{ccc}1 &0&0 \\ 0&1&0 \end{array}\right],
	\end{align*}
	while $z(k)$ is a \emph{performance} output: it is not measured, but it is used to design the controller. 
	The matrices $C_z$ and $D_\xi$ are:
	\begin{align*}
		C_z = \left[\begin{array}{ccc}\varepsilon &0 &0 \\ 0&0&0 \end{array}\right], \quad D_\xi = \left[\begin{array}{c}0\\r\end{array}\right]
\end{align*}
	so that the performance output is a combination of error and local input: 
	\begin{align*}
	z(k) = [\varepsilon e(k), r\,\xi(k)]^\top\!\!.
	\end{align*}
In order to promote both asymptotic stability and string stability, we define the $\mathcal{H}_\infty$ control objective of minimizing the norm
	\begin{align}
	\label{eq:zinequality}
	\Vert H \Vert_{\mathcal{H}_\infty}  = \frac{\Vert z\Vert_{\mathcal{L}_2}}{\Vert \nu \Vert_{\mathcal{L}_2}}.
	\end{align}
	Comparing \eqref{eq:zinequality} with \eqref{eq:L2norm}, the choice of $z$ becomes clear: we put a small weight on $e_i(k)$ so that $z(k)$ is close to $\xi(k)$.
	In order to ensure string stability, we hope that this norm can be made smaller than one by the $\mathcal{H}_\infty$ design.	
To solve this design problem, we make the assumption that the full state is available for feedback, thereby focusing on system 
	\begin{align}\label{eq:system_control_state-feedback}
	\left\{
	\begin{array}{rl}
		x(k+1) 	&= Ax(k) + B\xi(k) + E\nu(k) \\
		y(k)	&= x(k) \\
		z(k) 	&= C_zx(k) + D_\xi \xi(k)
	\end{array}\right.
	\end{align}
On this system, the $\mathcal{H}_\infty$ problem can be solved by applying the following result from~\cite{Stoorvogel:1992aa}.
\begin{lemma}[$\mathcal{H}_\infty$ design]\label{lem:Hinf-anton}
	Consider the following system
	\begin{align}\label{eq:system_h}
	\left\{
	\begin{array}{rl}
	x(k+1) 	&= Ax(k) + B\xi(k) + E\nu(k)\\
	z(k) 	&= Cx(k) + D_1\xi(k) 
	\end{array}\right.
	\end{align}
	and assume that system $(A,B,C,D_1)$ has no invariant zeroes on the unit circle. 
	Then, a controller in the form $$\xi(k) = \overline{F} x(k) + \bar L \nu(k)$$ exists, such that the closed-loop transfer function from $\nu$ to $z$ has $\mathcal{H}_\infty$-norm less than one, if and only if a symmetric matrix $P\geq 0$ exists, such that:
	\begin{enumerate}
		\item $V = D_1^\top\!\!D_1 + B^\top\!\!PB > 0$ 
		\item $R = I  - E^\top\!\!PE + E^\top\!\!PB \,V^{-1}\, B^\top\!\!PE > 0$
		\item $P$ satisfies the following Riccati equation: 
		$$P = A^\top\!\!PA+C^\top\!\!C-\begin{bmatrix}B^\top\!\!PA+D_1^\top\!\!C\\E^\top\!\!PA\end{bmatrix}^\top\!\! \!\! G(P)^{-1}\begin{bmatrix}B^\top\!\!PA+D_1^\top\!\!C\\E^\top\!\!PA\end{bmatrix},$$
		where $G(P) = \begin{bmatrix}D_1^\top\!\!D_1 & 0 \\ 0& - I\end{bmatrix} + \begin{bmatrix}B^\top\!\!\\E^\top\!\!\end{bmatrix}P\begin{bmatrix}B&E\end{bmatrix}$
	\end{enumerate}
	If such $P$ matrix exists, then the static feedback matrices $\overline{F}$ and $\overline{L}$ can be chosen as 
	\begin{align*}
	\overline{F} &= -[D_1^\top\!\!D_1+B^\top\!\!PB]^{-1} [B^\top\!\!PA+D_1^\top\!\!C]\\
	\overline{L} &= -[D_1^\top\!\!D_1+B^\top\!\!PB]^{-1} B^\top\!\!PE.
	\end{align*}
\end{lemma}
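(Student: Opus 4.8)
The plan is to view \eqref{eq:system_h} together with the desired controller $\xi(k) = \bar F x(k) + \bar L \nu(k)$ as a discrete-time full-information $\mathcal{H}_\infty$ problem and to solve it by the dynamic-game / completion-of-squares method. I would introduce the candidate storage function $V(x) = x^\top P x$ and the one-step cost
\[
W(x,\xi,\nu) = \|Cx + D_1\xi\|^2 - \|\nu\|^2 + (Ax + B\xi + E\nu)^\top P (Ax + B\xi + E\nu),
\]
which is quadratic in $(\xi,\nu)$ and whose Hessian in $(\xi,\nu)$ is, up to the factor $2$, exactly the matrix $G(P)$ of the statement. Because the controller sees both $x$ and $\nu$, the relevant saddle-point ordering is $\max_\nu\min_\xi$. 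Condition~1, $V = D_1^\top D_1 + B^\top PB > 0$, is precisely positive definiteness of the $\xi\xi$-block of $G(P)$, i.e. well-posedness of the inner minimization, whose minimizer is $\xi^\star(x,\nu) = -V^{-1}\big((B^\top PA + D_1^\top C)x + B^\top PE\,\nu\big)$; this is exactly the claimed $\bar F x + \bar L\nu$. Substituting $\xi^\star$ reduces $W$ to a quadratic in $\nu$ whose negated Hessian is the Schur complement $R = I - E^\top PE + E^\top PBV^{-1}B^\top PE$, so Condition~2, $R > 0$, is well-posedness of the outer maximization; and the Riccati equation of Condition~3 is precisely the assertion that, after completing the square in $(\xi,\nu)$ jointly (the inertia of $G(P)$ being the one forced by Conditions~1--2, the $\max_\nu\min_\xi$ value coincides with the unconstrained stationary value), the resulting saddle value equals $x^\top P x$ for all $x$, i.e. $P = A^\top PA + C^\top C - [\,\cdot\,]^\top G(P)^{-1}[\,\cdot\,]$ with the indicated column.

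For the ``if'' direction I would fix $\bar F,\bar L$ as in the statement and verify, by the completion of squares just described, the dissipation identity
\[
W(x,\bar F x + \bar L\nu,\nu) = V(x) - \big(\nu - \nu^\star(x)\big)^\top R\,\big(\nu - \nu^\star(x)\big)
\]
along the closed loop $x(k+1) = (A + B\bar F)x(k) + (E + B\bar L)\nu(k)$, with $z$ the corresponding performance output. Since the left-hand side equals $\|z(k)\|^2 - \|\nu(k)\|^2 + V(x(k+1))$, summing from $k = 0$ to $N$ with $x(0) = 0$ and using $P \ge 0$ yields $\sum_{k=0}^{N}\big(\|z(k)\|^2 - \|\nu(k)\|^2\big) \le -V(x(N+1)) \le 0$, hence $\|z\|_{\mathcal{L}_2} \le \|\nu\|_{\mathcal{L}_2}$. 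Upgrading this to the strict bound $\|H\|_{\mathcal{H}_\infty} < 1$ together with internal stability of $A + B\bar F$ uses that $R > 0$ strictly and that $P$ is the \emph{stabilizing} solution of the Riccati equation; this is where the hypothesis that $(A,B,C,D_1)$ has no invariant zeros on the unit circle enters, ensuring that the associated symplectic pencil has no eigenvalues on the unit circle, so that a small-gain perturbation argument applies.

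For the ``only if'' direction I would start from a stabilizing controller achieving closed-loop $\mathcal{H}_\infty$-norm $< 1$, invoke the discrete-time bounded real lemma to obtain a positive semidefinite closed-loop certificate, and then study the value of the associated zero-sum game
\[
x^\top P x = \sup_{\nu\in\mathcal{L}_2}\;\inf_{\xi}\;\sum_{k\ge 0}\big(\|z(k)\|^2 - \|\nu(k)\|^2\big),\qquad x(0) = x.
\]
The task is to show that the $\mathcal{H}_\infty$-norm bound forces this value to be finite and quadratic, that the resulting $P \ge 0$ solves the Riccati equation of Condition~3 as its stabilizing solution, and that Conditions~1--2 emerge as the correct inertia of $G(P)$ at the optimum. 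I expect this to be the main obstacle: establishing that the game value is a genuine quadratic storage function and that the Riccati iteration converges to the stabilizing solution (again using the no-invariant-zeros hypothesis), rather than merely yielding a supersolution of the Riccati inequality. Since the statement is quoted from \cite{Stoorvogel:1992aa}, at this point I would defer to the detailed construction there; the only part specific to the present setting is to instantiate it on \eqref{eq:system_h} and to record the explicit static gains $\bar F$ and $\bar L$.
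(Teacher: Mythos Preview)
The paper does not prove this lemma at all: it is quoted as a known result from \cite{Stoorvogel:1992aa}, and the only remark following its statement is that a procedure to compute $P$ can be found in \cite{Chen:1994aa}. So there is no ``paper's own proof'' to compare against.

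Your proposal is the standard full-information discrete-time $\mathcal{H}_\infty$ argument (storage function, completion of squares in $(\xi,\nu)$, identification of Conditions~1--3 with the inertia and stationarity conditions of $G(P)$, bounded-real/game-value argument for necessity), and it is correct in outline; this is essentially the route taken in the cited reference. The one point where your sketch is a bit loose is the passage from the nonstrict dissipation inequality to the strict bound $\|H\|_{\mathcal{H}_\infty}<1$ together with internal stability of $A+B\bar F$: you correctly flag that this requires $P$ to be the stabilizing solution and that the no-invariant-zeros hypothesis is what guarantees its existence, but the actual mechanism (factoring out the $R$-weighted square in $\nu-\nu^\star$ to get a strict margin, and using stabilizability/detectability of the Riccati data to conclude that $A+B\bar F$ is Schur) deserves to be spelled out if you intend a self-contained proof. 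Since the paper itself is content to cite the result, your level of detail already exceeds what is needed here.
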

A general procedure to compute matrix $P$ can be found in \cite{Chen:1994aa}.
%
In the next subsections we shall see more precisely how this deterministic design problem can be useful in our stochastic lossy system~\eqref{eq:ss_for-design}.

\subsection{Controlling the expectation}
	By using the model \eqref{eq:system_control} for the single vehicle dynamics, and assuming that the communication between vehicles is modelled by a Bernoulli process,  our system consists of $n$ sub-systems, each of which can be described by the dynamics~\eqref{eq:ss_for-design}:
	\begin{align}\label{eq:lossy_ss}
	\left\{\begin{array}{rll}
	x(k+1) &= Ax(k) + B\xi(k) + E\nu(k)\\
	y(k) & = C x(k) \\
	\subscr{y}{comm}&=\delta(k) \nu(k) 
	\end{array}\right.
	\end{align}
	where $\delta(k) = 0$ when the communication from the preceding vehicle is lost, and $\mathds{P}\left[\delta(k) = 1\right] = 1-p$. 
	We recall that the losses, i.e.\ $\delta(k)$, are time and space independent: the losses experienced by different vehicles are uncorrelated, as well as subsequent losses experienced by a single vehicle.

	The following simple result gives us the average behavior of the lossy system~\eqref{eq:lossy_ss} interconnected with a switching controller. 
\begin{lemma}[Expected dynamics]\label{lem:Ex}
		The expected value $\mathds{E}[x(k)]$ of the state of system~\eqref{eq:lossy_ss}
		interconnected with a stochastic switching controller in the form:		
		\begin{equation}\label{eq:general_switching_controller}
		\xi(k) =\delta(k)\big(F_1x(k)+L\subscr{y}{comm}(k)\big)		+\big(1-\delta(k)\big) F_2\, x(k)
		\end{equation}
		has the same dynamics as the state of system~\eqref{eq:lossy_ss} when interconnected with a non-switching deterministic controller in the form:
		\begin{align*}
		\xi(k) = \overline{F}x(k) + \overline{L}\nu(k) 
		\end{align*}
		where 
		\begin{align*}
		\overline{F} 	&= (1-p)F_1 + pF_2\\
		\overline{L} 	&= (1-p)L
		\end{align*}
	\end{lemma}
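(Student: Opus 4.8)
The plan is to substitute the communicated output into the switching control law, take expectations while exploiting the independence structure of the Bernoulli losses, and match the resulting recursion term by term with the deterministic closed loop.

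First I would rewrite the stochastic controller \eqref{eq:general_switching_controller}. Substituting $\subscr{y}{comm}(k)=\delta(k)\nu(k)$ and using that $\delta(k)$ takes values in $\{0,1\}$, so that $\delta(k)^2=\delta(k)$, the input simplifies to
\[
\xi(k)=\delta(k)\big(F_1x(k)+L\nu(k)\big)+\big(1-\delta(k)\big)F_2x(k).
\]
Plugging this into the state recursion $x(k+1)=Ax(k)+B\xi(k)+E\nu(k)$ of \eqref{eq:lossy_ss} gives a closed-loop equation whose right-hand side is affine in $x(k)$ and $\nu(k)$, with the only products being $\delta(k)x(k)$ and $\delta(k)\nu(k)$.

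The key step is to take expectations and factor these products. Since losses are time- and space-independent, $\delta(k)$ is independent of the history of the whole platoon up to time $k$; in particular it is independent of $x(k)$, which is a function of $\delta(0),\dots,\delta(k-1)$ and exogenous signals only, and of $\nu(k)=u_{i-1}(k)$, which depends only on the losses at the preceding vehicle. Hence $\mathds{E}[\delta(k)x(k)]=(1-p)\mathds{E}[x(k)]$, $\mathds{E}[\delta(k)\nu(k)]=(1-p)\mathds{E}[\nu(k)]$ and $\mathds{E}[(1-\delta(k))x(k)]=p\,\mathds{E}[x(k)]$. Collecting terms yields
\[
\mathds{E}[x(k+1)]=A\,\mathds{E}[x(k)]+B\big((1-p)F_1+pF_2\big)\mathds{E}[x(k)]+\big(B(1-p)L+E\big)\mathds{E}[\nu(k)],
\]
which is precisely system \eqref{eq:lossy_ss} driven by the deterministic controller $\xi=\overline{F}x+\overline{L}\nu$ with $\overline{F}=(1-p)F_1+pF_2$ and $\overline{L}=(1-p)L$, now read as a recursion for $\mathds{E}[x(k)]$. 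A short induction on $k$, with matching initial condition, then identifies the mean trajectory of the switched stochastic loop with the trajectory of the deterministic loop.

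I expect the only point requiring genuine care to be the justification that $\delta(k)$ is independent of $x(k)$ and $\nu(k)$ — i.e.\ making the filtration/measurability argument behind the factorization of the expectations explicit — together with the bookkeeping identity $\delta(k)^2=\delta(k)$; the remaining algebra is entirely routine.
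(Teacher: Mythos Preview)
Your proposal is correct and follows the same direct computation that underlies the paper's treatment: substitute the controller into the state recursion, use $\delta(k)^2=\delta(k)$ together with the independence of $\delta(k)$ from $x(k)$ and $\nu(k)$ to factor the expectations, and read off $\overline{F}=(1-p)F_1+pF_2$, $\overline{L}=(1-p)L$. The paper itself does not spell out a proof of this lemma --- it simply records the resulting expected dynamics later in~\eqref{eq:exalpha} --- so your argument is in fact more detailed than what appears there, and your displayed recursion is even slightly more complete since it retains the $E\,\mathds{E}[\nu(k)]$ term.
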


	This lemma can be used to find a switching controller such that the expected dynamics of the lossy system is the same as a given deterministic system. 
In our design, we shall impose that {\em the expected system coincides with the nominal system without losses~\eqref{eq:system_control_state-feedback}}. In order to exploit the state-feedback controller provided by Lemma~\ref{lem:Hinf-anton}, we compute $\overline{L}$ and $\overline{F}$ by applying Lemma~\ref{lem:Hinf-anton} with $C=C_z$, $D_1=D_\xi$. Matrix $F_1$ will be chosen later in such a way to minimize covariance, as detailed in the next subsection.

%

\subsection{Covariance minimization}
	Thanks to Lemma \ref{lem:Ex}, we are able to guarantee the expected behaviour of the platoon to be string stable, provided that it is possible to find for the lossless system~\eqref{eq:system_control_state-feedback} a state feedback $\xi=\overline{F}x+\overline{L}\nu$ such that the $\mathcal{H}_\infty$ gain is less than one.
Now, we focus on minimizing the covariance of the error, in order to keep the trajectories to be  close to the expected, string stable, behavior.	
Therefore we want to minimise the variance
	\begin{align}\label{eq:covariance_min}
	\mathds{E}\left[\Vert \left(x-\mathds{E}[x]\right) \Vert_2^2\right] 
	\end{align}
\begin{theorem}[Controller design]
Consider the dynamics 
$$ 	x(k+1) = Ax(k) + B\xi(k) + E\nu(k)$$  with the stochastic control law 
	\begin{align*}
	\xi(k)=\left\{
	\begin{array}{ll}
	F_1x(k) + L\nu(k) &\text{with probability $1-p$}\\
	F_2x(k)				&\text{with probability $p$}\\	
	\end{array}
	\right.
	\end{align*}
and the same dynamics with the nominal deterministic control law $\xi(k)=\bar F x(k) + \bar L \nu(k) $.
Assume that \begin{align}\label{eq:F1}
	F_1 =& \overline{F} - pL\mathds{E}[\nu]\mathds{E}[x]^\top\!\!\left(\mathds{E}[\tilde{x}\tilde{x}^\top\!\!]+\mathds{E}[x]\mathds{E}[x]^\top\!\!\right)^{-1}\\
	F_2 =& \frac{1}{p}\left(\overline{F}-(1-p)F_1\right) \label{eq:F2}\\
	L = &\frac{1}{1-p}\overline{L},\label{eq:L}
	\end{align}
	where $\tilde{x}(k)=x(k)-\mathds{E}\left[x(k)\right]$.
Then, the expectation of the stochastic dynamics follows the nominal dynamics and cost~\eqref{eq:covariance_min} is minimized. \end{theorem}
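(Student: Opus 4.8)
The statement makes two assertions — that $\mathds{E}[x(k)]$ obeys the nominal closed loop, and that \eqref{eq:F1}–\eqref{eq:L} minimise the dispersion cost \eqref{eq:covariance_min} — and I would prove them in that order, since the second builds on the first. The first part is an immediate corollary of Lemma~\ref{lem:Ex}: the stochastic control law of the theorem is exactly the switching law \eqref{eq:general_switching_controller} (recall $\subscr{y}{comm}(k)=\delta(k)\nu(k)$ and $\delta(k)^2=\delta(k)$, so the two branches of \eqref{eq:general_switching_controller} read $F_1x(k)+L\nu(k)$ when $\delta(k)=1$ and $F_2x(k)$ when $\delta(k)=0$). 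Lemma~\ref{lem:Ex} then says $\mathds{E}[x(k)]$ coincides with the state of $x(k+1)=Ax(k)+B\xi(k)+E\nu(k)$ under $\xi(k)=\big((1-p)F_1+pF_2\big)x(k)+(1-p)L\,\nu(k)$, so it suffices to check $(1-p)F_1+pF_2=\bar F$ and $(1-p)L=\bar L$; both are direct rearrangements of \eqref{eq:F2} and \eqref{eq:L}. Note these identities hold for \emph{any} admissible $F_1$, so the expected trajectory is insensitive to the particular choice of $F_1$, which is what frees $F_1$ for variance reduction, with $F_2$ then determined by \eqref{eq:F2}.

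\emph{Covariance recursion.} Writing $\delta(k)=(1-p)+\Delta(k)$ — so $\mathds{E}[\Delta(k)]=0$, $\mathds{E}[\Delta(k)^2]=p(1-p)$, and $\Delta(k)$ is independent of $(x(k),\nu(k))$ — the closed loop becomes
\[
x(k+1)=(A+B\bar F)\,x(k)+(B\bar L+E)\,\nu(k)+\Delta(k)\,B\,\eta(k),\qquad \eta(k)=(F_1-F_2)\,x(k)+L\,\nu(k).
\]
Subtracting its expectation (which by the first part is the nominal recursion), one gets $\tilde x(k+1)=(A+B\bar F)\tilde x(k)+(B\bar L+E)\tilde\nu(k)+\Delta(k)B\eta(k)$ with $\tilde x:=x-\mathds{E}[x]$. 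Since $\Delta(k)$ is zero-mean and independent of $(x(k),\nu(k))$, every cross term containing $\Delta(k)$ vanishes in $\Sigma(k+1):=\mathds{E}[\tilde x(k+1)\tilde x(k+1)^\top]$, and because the input $\xi$ is scalar, $\eta(k)$ is scalar and $B\eta\eta^\top B^\top=\eta^2 BB^\top$; hence
\[
\Sigma(k+1)=\Phi\big(\Sigma(k)\big)+p(1-p)\,\mathds{E}[\eta(k)^2]\,BB^\top,
\]
where $\Phi$ is an affine, positive-semidefinite-monotone map built only from $A+B\bar F$, $B\bar L+E$ and the first and second moments of $(x(k),\nu(k))$ — all of which, by the first part, do not depend on $F_1$. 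So the only $F_1$-dependence of $\Sigma(k+1)$ is through the scalar $\mathds{E}[\eta(k)^2]$, and a short induction on $k$ (monotonicity of $\Phi$, together with monotonicity of $\min_{F_1}\mathds{E}[\eta(k)^2]$ in $\Sigma(k)$) shows that minimising $\mathds{E}[\eta(k)^2]$ at every step makes $\Sigma(k)$ the smallest possible in the Loewner order for all $k$; in particular it minimises $\operatorname{tr}\Sigma(k)=\mathds{E}[\|\tilde x(k)\|_2^2]$, which is \eqref{eq:covariance_min}.

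\emph{The minimiser.} From $(1-p)F_1+pF_2=\bar F$ one gets $F_1-F_2=\tfrac1p(F_1-\bar F)$, hence $\eta(k)=\tfrac1p(F_1-\bar F)x(k)+\tfrac1{1-p}\bar L\,\nu(k)$, whose second moment is a convex quadratic in the row vector $F_1$. Setting the gradient to zero gives the normal equation $\tfrac1p(F_1-\bar F)\,\mathds{E}[xx^\top]=-L\,\mathds{E}[\nu x^\top]$; using $\mathds{E}[xx^\top]=\mathds{E}[\tilde x\tilde x^\top]+\mathds{E}[x]\mathds{E}[x]^\top$ (invertible by positive definiteness of the state covariance) together with $\mathds{E}[\nu x^\top]=\mathds{E}[\nu]\mathds{E}[x]^\top$ — the cross-correlation vanishing because $\nu=u_{i-1}$ is the exogenous disturbance of the $\mathcal{H}_\infty$ problem of Lemma~\ref{lem:Hinf-anton}, not a function of the present state — and solving for $F_1$ yields precisely \eqref{eq:F1}; $F_2$ and $L$ are then read off from \eqref{eq:F2}–\eqref{eq:L}, and the Hessian $\tfrac{1}{p^2}\mathds{E}[xx^\top]\succ0$ confirms it is a genuine minimum.

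\emph{Main obstacle.} The delicate point is the global-optimality claim in the covariance step: arguing that greedy, per-step minimisation of the innovation variance $\mathds{E}[\eta(k)^2]$ is optimal for the entire covariance trajectory, which rests on the monotonicity of $\Phi$ and of $\min_{F_1}\mathds{E}[\eta(k)^2]$ with respect to $\Sigma(k)$, and implicitly on treating $\nu$ as an exogenous signal whose statistics are unaffected by vehicle $i$'s own gains — the same modelling convention already in force in Lemma~\ref{lem:Hinf-anton}, and also the reason the cross-correlation $\mathds{E}[\nu x^\top]$ factors in the last step. A secondary nuisance is the bookkeeping of which moments are $F_1$-independent; without it the decomposition $\Sigma(k+1)=\Phi(\Sigma(k))+p(1-p)\mathds{E}[\eta(k)^2]BB^\top$ would not cleanly isolate $F_1$.
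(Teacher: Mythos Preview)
Your proof is correct and reaches the same optimiser as the paper, but the route is cleaner. The paper expands $\mathds{E}[\tilde x_+\tilde x_+^\top]$ directly (the ``lengthy manipulations using \eqref{eq:F2}'' that it omits), collects the $F_1$-dependent terms, and differentiates the trace to obtain the first-order condition; you instead centre the Bernoulli variable as $\delta=(1-p)+\Delta$, which splits the closed loop into the nominal part plus the zero-mean perturbation $\Delta\,B\eta$ and confines all $F_1$-dependence to the single scalar $p(1-p)\,\mathds{E}[\eta^2]$. This buys two things the paper does not have: a one-line derivative (a least-squares normal equation in the row vector $F_1$) and an explicit separation showing that the remaining map $\Phi(\Sigma(k))$ is $F_1$-independent. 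You also go beyond the paper by arguing, via Loewner monotonicity of $\Phi$ and of $\min_{F_1}\mathds{E}[\eta^2]$, that the greedy per-step choice is globally optimal for the whole covariance trajectory; the paper stops at the one-step stationarity condition and never raises this issue. One caveat shared by both arguments: the factoring $\mathds{E}[\nu x^\top]=\mathds{E}[\nu]\,\mathds{E}[x]^\top$ is not implied merely by $\nu$ being ``exogenous'' (past values of $\nu$ feed into $x(k)$ through the dynamics), so it is an additional modelling assumption --- the paper uses it silently in its omitted manipulations (the $F_1$-linear term in its covariance expression already carries $\mathds{E}[x]\mathds{E}[\nu]^\top$ rather than $\mathds{E}[x\nu^\top]$), hence your proof is no weaker, but the justification you give for that step could be sharpened.
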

	
\begin{proof}


We can compute the dynamics of the expected state
	\begin{align}
	\nonumber
	\mathds{E}[x(k+1)] =& A\mathds{E}[x(k)]+B\left((1-p)F_1+pF_2\right)\mathds{E}[x(k)]\\&+(1-p)BL\mathds{E}[\nu(k)] \label{eq:exalpha}
	\end{align}
and, by some lengthy manipulations using \eqref{eq:F2} that are not reported, the dynamics of its covariance
	\begin{align*}
\mathds{E}\left[\tilde{x}_+\tilde{x}^\top\!\!_+\right]=&A\mathds{E}[\tilde{x}\tilde{x}^\top\!\!]A^\top\!\! + E\mathds{E}[\tilde{\nu}\tilde\nu^\top\!\!]E^\top\!\! + 
	2(1-p)BL\mathds{E}[\tilde\nu\tilde\nu^\top\!\!]E^\top\!\!\\&+ (1-p)BL\mathds{E}[\tilde\nu\tilde\nu^\top\!\!]L^\top\!\!B^\top\!\! +
	 p(1-p)BL\mathds{E}[\nu]\mathds{E}[\nu]^\top\!\!L^\top\!\!B^\top\!\!  \\ &-B\overline{F}\mathds{E}[x]\mathds{E}[x]^\top\!\!\overline{F}^\top\!\!B^\top\!\! + 2B\overline{F}\mathds{E}[\tilde{x}\tilde{x}^\top\!\!]A^\top\!\! \\
	&- 2(1-p)B\overline{F}\mathds{E}[x]\mathds{E}[\nu]^\top\!\!L^\top\!\!B^\top\!\! \\& +\frac{1}{p}B\overline{F}\left(\mathds{E}[\tilde{x}\tilde{x}^\top\!\!]+\mathds{E}[x]\mathds{E}[x]^\top\!\!\right)\overline{F}^\top\!\!B^\top\!\! \\
	&+\frac{1-p}{p}BF_1\left(\mathds{E}[\tilde{x}\tilde{x}^\top\!\!]+\mathds{E}[x]\mathds{E}[x]^\top\!\!\right)F_1^\top\!\!B^\top\!\!
	\\&+2(1-p)BF_1\mathds{E}[x]\mathds{E}[\nu]^\top\!\!L^\top\!\!B^\top\!\!\\
	&- 2\frac{1-p}{p}[BF_1\left(\mathds{E}[\tilde{x}\tilde{x}^\top\!\!]+\mathds{E}[x]\mathds{E}[x]^\top\!\!\right)\overline{F}^\top\!\!B^\top\!\! 
	\end{align*}
	where we have dropped the dependence on $k$ to increase the readability and denoted $\tilde{x} = \tilde{x}(k)$ and $\tilde{x}_+ = \tilde{x}(k+1)$.
	
	We want to find $F_1$ to minimise \eqref{eq:covariance_min}, therefore we compute
	\begin{align*}
	\frac{\partial Tr\left[\tilde{x}_+\tilde{x}^\top\!\!_+\right]}{\partial BF_1} =&
	2\frac{1-p}{p}BF_1\mathds{E}[xx^\top\!\!] + 2(1-p)BL\mathds{E}[\nu]\mathds{E}[x]^\top\!\! \\&- 2\frac{1-p}{p}B\overline{F}\mathds{E}[xx^\top\!\!]
	\end{align*}
and conclude that matrix $F_1$ in~\eqref{eq:F1} minimises variance \eqref{eq:covariance_min}. Given $F_1$, matrices $F_2$ and $L$ descend from~\eqref{eq:F2} and \eqref{eq:L}.
\end{proof}
	
The time-varying matrix gain in \eqref{eq:F1} should be used in the controller, but its computation is problematic because it requires the knowledge of expectation and variance of $\nu$. Therefore, we propose an approximation.

\subsection{Approximate covariance minimization controller}

The controller gain in \eqref{eq:F1} is time varying: to minimise the covariance of the error $x(k+1)$ the controller needs to know the expected value of the input at time $k$, i.e. $\mathds{E}[\nu(k)]$, the expected value of the error itself $\mathds{E}[x(k)]$ and its covariance $\mathds{E}[\tilde{x}(k)\tilde{x}(k)^\top\!\!]$. 
In order to derive a handier relation, we look for a static matrix to approximate the time varying $F_1$ in \eqref{eq:F1}. This approximation will be derived by looking at expected dynamics.
Even though $F_1$ is time-varying, the average controlled dynamics is time-invariant, namely
\begin{align}\label{eq:controlled_average_dynamics}
\left\{\begin{array}{rll}
\mathds{E}[x(k+1)] 	&= A\mathds{E}[x(k)] + B\mathds{E}[\xi(k)]+ E\mathds{E}[\nu(k)]\\
\mathds{E}[\xi(k)]	&=\overline{F}\mathds{E}[x(k)]+\overline{L}\mathds{E}[\nu(k)]
\end{array}\right.
\end{align}
and the corresponding transfer function is
%
\begin{align*}
G_{\nu\rightarrow\xi}=\frac{\Xi(z)}{N(z)}=\overline{F}(zI-A-B\overline{F})^{-1}(E+B\overline{L})+\overline{L}.
\end{align*}
%
%
%
Fig.~\ref{fig:u_s} shows by an example that this relation is effective in reconstructing the inputs, up to a delay that was not considered in the model. 
\begin{figure}\centering
	\includegraphics[width=.9\columnwidth]{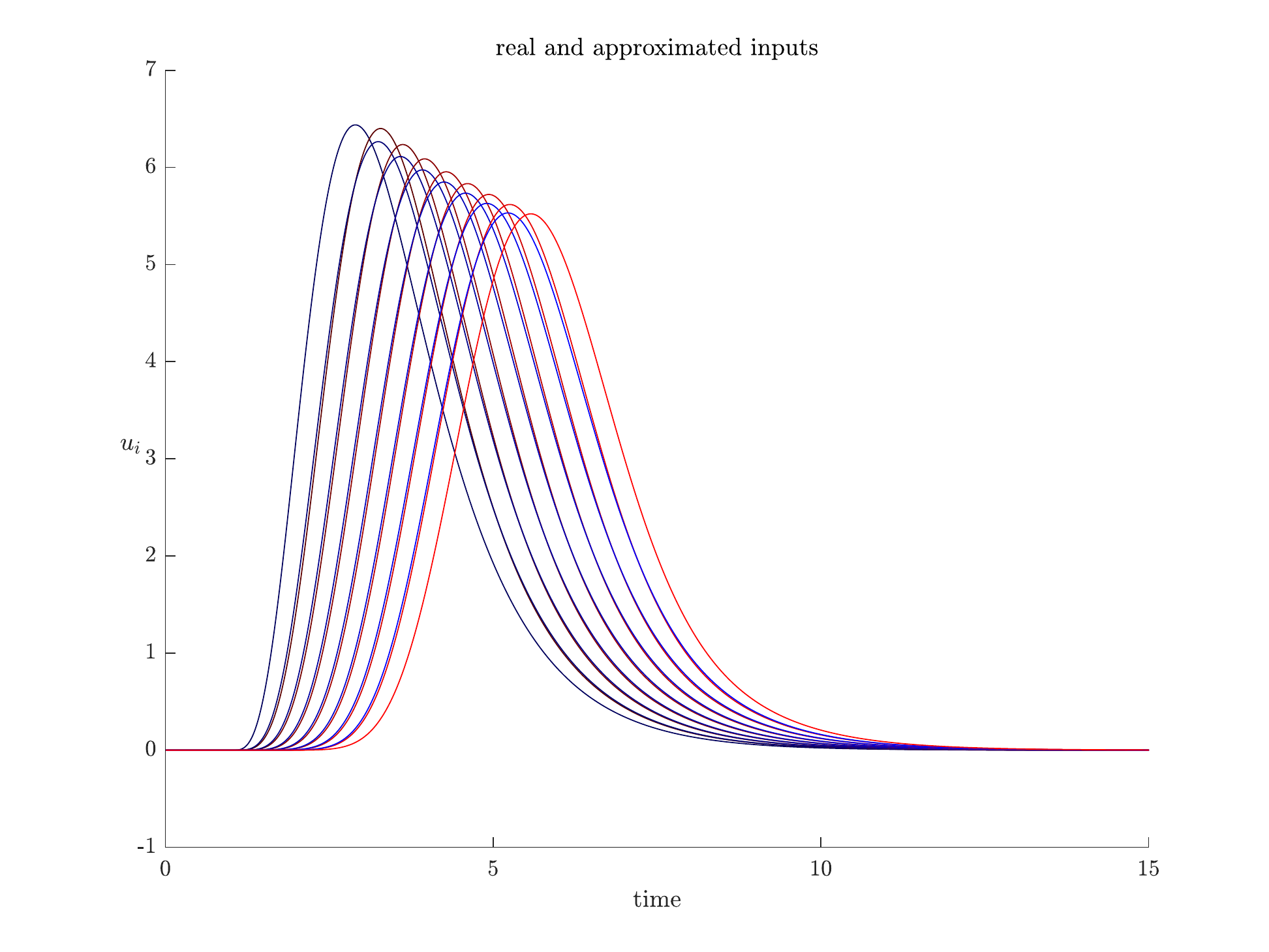}
	\caption{Real and estimated inputs for a platoon of 8 vehicles. The real inputs are displayed in red, while the output of $G_{\nu\rightarrow\xi}$ in blue. See Section~\ref{sect:simulations} for more information on the simulations.}
	\label{fig:u_s}
\end{figure}
Next, we further simplify the relation $\nu\rightarrow\xi$ by approximating it by its mere low frequency gain 
$g := \lim_{z\to 1}G_{\nu\rightarrow\xi}$
and therefore we approximate:
\begin{align}\label{eq:nu_approximation}
\mathds{E}[\nu(k)] \approx \frac{1}{g}\mathds{E}[\xi(k)].
\end{align}
By using the approximation \eqref{eq:nu_approximation} in \eqref{eq:controlled_average_dynamics} we get
\begin{align*}
\mathds{E}[\nu(k)] \approx \frac{1}{g }\mathds{E}[\xi(k)] = \frac{1}{g }\overline{F}\mathds{E}[x(k)] + \frac{1}{g }\overline{L}\mathds{E}[\nu(k)]
\end{align*}
and finally 
\begin{align}\label{eq:nu_approx_Ex}
\mathds{E}[\nu(k)] \approx \left(1-\frac{1}{g }\overline{L}\right)\frac{1}{g }\overline{F}\mathds{E}[x(k)].
\end{align}
Now we can use approximation~\eqref{eq:nu_approx_Ex} in \eqref{eq:F1}, which becomes
\begin{align*}
F_1 
\approx
&\overline{F}-\frac{p}{1-p}\overline{L}\left(1-\frac{1}{g }\overline{L}\right)\frac{1}{g }\overline{F}\mathds{E}[x]\mathds{E}[x]^\top\!\!\mathds{E}[xx^\top\!\!]^{-1}
\end{align*}
%
Finally, we disregard the statistical dispersion by approximating $\mathds{E}[x]\mathds{E}[x]^\top\!\!\mathds{E}[xx^\top\!\!]^{-1}$ $F_1$ by the identity matrix. We thus obtain
 \begin{align}\label{eq:F1approx}
F_1 \approx \left(1-\frac{p}{1-p}\overline{L}\left(1-\frac{1}{g }\overline{L}\right)\frac{1}{g }\right)\overline{F}
\end{align}
which is the constant gain used in our implementation.


\subsection{State observer design}
Even though in Section~\ref{sect:full-info-controller} we have designed a state-feedback controller, the output of system~\eqref{eq:system_control} is not the full state. Since one input is unknown, we need to estimate the state, which we do via an Unknown Input Observer~\cite{Frank:1992aa,Popescu:2018aa}. Because of the measurement delay, we can only estimate the delayed state
	\begin{align}
		x_d(k) = x(k-m) \label{eq:delayed_state} 
	\end{align}
	and by substituting \eqref{eq:delayed_state} in \eqref{eq:system_control} we get:
	\begin{align*}
	x_d(k+1) 	&= Ax_d(k) + B\xi_d(k) +E\nu_d(k) \\
	y(k) 		&= Cx_d(k)  \nonumber
	\end{align*}
	where $\xi_d(k) = \xi(k-d-m)$ and $\nu_d(k) = \nu(k-d-m)$ are the delayed inputs. 
		
	As observer for the delayed state $x_d(k)$  it is possible to use the following dynamical system:
	\begin{align}
	\zeta(k+1) &= F\zeta(k) + GB\xi_d(k) +Ky(k) \label{eq:UIO} \\
	\hat{x}(k) &=\zeta(k) + Hy(k) \nonumber
	\end{align}
	After simple algebraic manipulations, it can be shown that the estimate error $\varepsilon(k) = x_d(k) - \hat{x}(k)$ follows the dynamics:
	\begin{align*}
	\varepsilon(k+&1) 
	=Ax(k) + B\xi_d(k) + E\nu_d(k)-F\zeta(k)-GB\xi_d(k)\\&-K_1Cx(k)-K_2y(k)-Hy(k+1)\\
	=&Ax(k)+B\xi_d(k)+E\nu_d(k)-F(\hat{x}(k)-Hy(k))\\&-GB\xi_d(k)-K_1Cx(k)-K_2y(k)\\
	&-HCAx(k)-HCB\xi_d(k) - HCE\nu_d(k)\\
	=&(A-K_1C-HCA)(x(k)-\hat{x}(k)) + (A-K_1C-HCA-F)\hat{x}(k)\\
	&+(I-G-HC)B\xi_d(k) +(I-HC)E\nu_d(k) + (FH-K_2)y(k) 
	\end{align*}
	where $K = K_1+K_2$. Since $CE$ is injective, we can choose
	\begin{align*}
	F &= A - K_1C - HCA\\
	K_2 &= FH\\
	H &= E[(CE)^\top\!\!CE]^{-1}(CE)^\top\!\!\\
	G &= I - HC
	\end{align*}
	to obtain
	\begin{align*}
	\varepsilon(k+1) = (A-K_1C-HCA) \varepsilon(k)
	\end{align*}
In order to bring the estimate error to zero, we can choose $K_1$ to have the desired poles for $F$, because the couple $(A-HCA, C_1)$ is observable. A straightforward solution is choosing the gain matrix $K_1$ to impose a deadbeat response with all the poles of $F$ in $0$. 
In our implementation, we use $\hat x(k)$ as estimate for $x(k)$, effectively disregarding the measurement delay.

\subsection{Modeling input delays: Lifted system}
So far, our design has disregarded delays. Indeed, delays can be accounted for by writing a standard lifted system.
Let us recallf that the system to control, for the full information case and including the input delay,  is 
\begin{align}\label{eq:delay_fullinfo}
\left\{\begin{array}{rll}
x(k+1) 	&= Ax(k) + B\xi(k-d) + E\nu(k)\\
z(k) 	&=C_zx(k) + R\xi(k)
\end{array}\right.
\end{align}
where $C_z = \left[\begin{array}{ccc}\varepsilon &0 &0\\0&0&0\end{array}\right]$ and $R = \left[\begin{array}{c}0\\1\end{array}\right]$. 
By defining
\begin{align*}
\xi_e(k) &= \left[\begin{array}{cccc}\xi(k-d+1)&\xi(k-d+2)&\dots&\xi(k-1)\end{array}\right]^\top\!\!\\
\nu_e(k) &= \left[\begin{array}{cccc}\nu(k-d+1)&\nu(k-d+2)&\dots&\nu(k-1)\end{array}\right]^\top\!\!\\
x_e(k) &=  \left[\begin{array}{ccccc}x(k)^\top\!\! &\xi(k-d) &\xi_e(k)^\top\!\!&\nu(k-d)&\nu_e(k)^\top\!\!\end{array}\right]^\top\!\!
\end{align*}
we can write~\eqref{eq:delay_fullinfo} as
\begin{align*}
\left[\begin{array}{c}x(k+1)\\\xi(k-d+1)\\\xi_e(k+1)\\\nu(k-d+1)\\\nu_e(k+1)\end{array}\right]
=&
\left[\begin{matrix}
A&B&0&E&0\\
0&0&e_1^\top\!\!&0&0\\
0&0&\Omega&0&0\\
0&0&0&0&e_1^\top\!\!\\
0&0&0&0&\Omega
\end{matrix}\right]
\left[\begin{array}{c}x(k)\\\xi(k-d)\\\xi_e(k)\\\nu(k-d)\\\nu_e(k)\end{array}\right] \\&+
\left[\begin{array}{c}0_{n,n}\\0\\e_{d}\\0\\0_d\end{array}\right]\xi(k)+
\left[\begin{array}{c}0_{n,n}\\0\\0_d\\0\\e_d\end{array}\right]\nu(k)\\
z(k)&=		\left[\begin{array}{ccc}C_z&0_{2,d}^\top\!\!&0_{2,d}^\top\!\!\end{array}\right]x_e(k)	+ R \xi(k)
\end{align*}
where
$e_1 = \left[\begin{array}{cccc}1&0&\dots&0\end{array}\right]^\top\!\!$, $e_d = \left[\begin{array}{cccc}0&\dots&0&1\end{array}\right]^\top\!\!$ and
\begin{align*}
\Omega = \left[\begin{matrix}
0&1&0&0&\dots&0\\
0&0&1&0&\dots&0\\
\vdots&&&\ddots&&\vdots\\
0&0&0&0&1&0\\
0&0&0&0&0&1\\
0&0&0&0&0&0
\end{matrix}\right] \in \mathds{R}^d\times\mathds{R}^d
\end{align*}
This dynamics can be summarized in the form
\begin{align}\label{eq:lift_control_sys}
\left\{\begin{array}{rl}
x_e(k+1) &= A_dx_e(k)+B_d\xi(k)+E_d\nu(k)\\
z(k) &= C_dx(k)+R\xi(k),
\end{array}\right.
\end{align}
which directly parallels the form~\eqref{eq:system_control}.
The design described in the previous sections on system~\eqref{eq:system_control} shall actually be applied to this lifted system to compute a controller that takes into account the delays.

%
	\begin{figure*} 
		\centering
		\includegraphics[width=.95\textwidth]{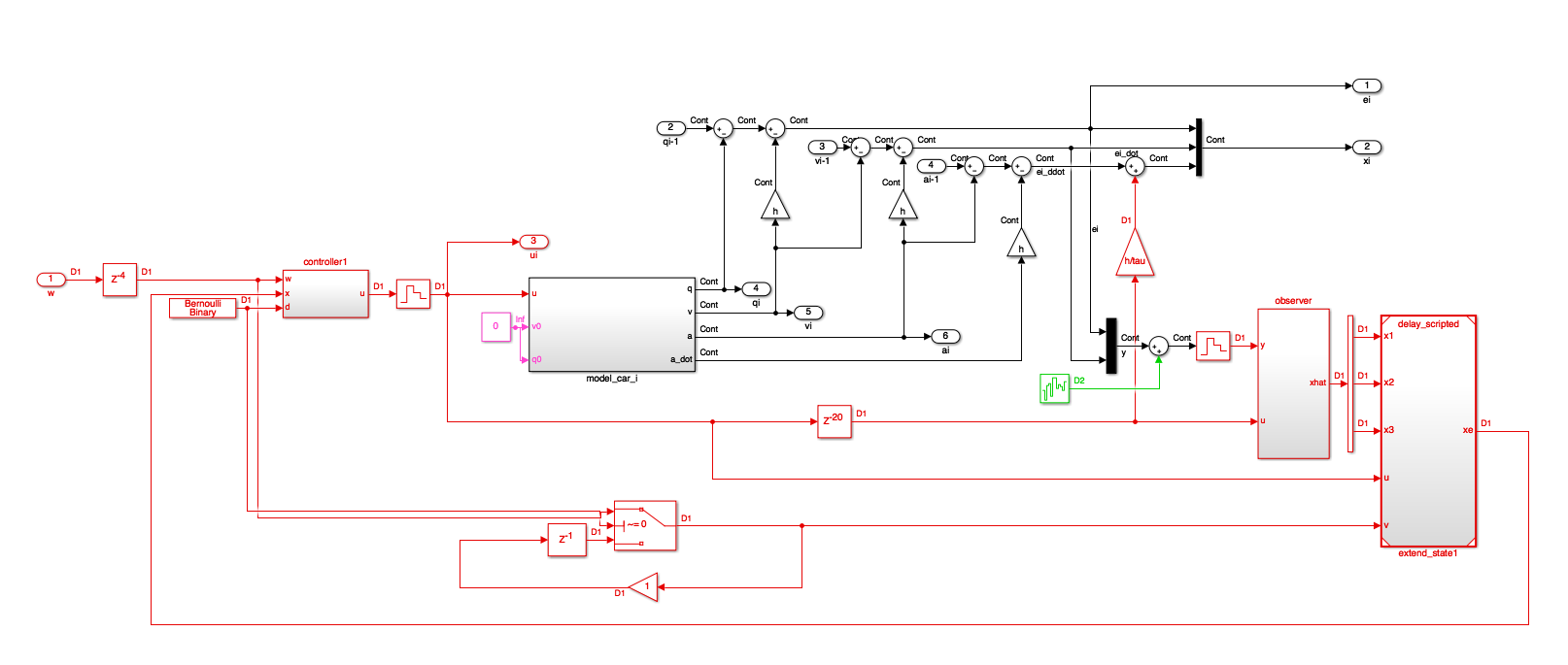}
		\caption{Comprehensive single vehicle Simulink model. The $\mathcal{H}_\infty$ controller is \emph{controller1}, the unknown input observer is \emph{observer} and the lifting block is \emph{extend\_state1}.
		The continuous dynamics~\eqref{eq:dynamics} is simulated in \emph{model\_car\_i}.}
	  \label{fig:single_car_ss}
	\end{figure*}
\section{Numerical results}\label{sect:simulations}
We have implemented our dynamical model and our controller using the software Matlab/Simulink. 
Each vehicle and its controller are implemented as in Fig.~\ref{fig:single_car_ss}: the implementation includes features that have been disregarded during the design, such as transmission delays, input delays and measurement noise.
The vehicle dynamics is simulated in continuous time~\eqref{eq:dynamics}, while the controller is digital as described in Section~\ref{sect:control-design}. The solver used is \texttt{ode23s}, as the large number of delays in the system leads to chattering if a non-stiff solver 
is used.
The dynamical parameters are set as $\tau = 0.1s$, input delay $\phi = 0.2s$, transmission delay $\theta = 0.02s$, and measurement delay $\psi = 0.05s$.  We also have observed robustness to small measurement noise (up to $1-2\%$). The sampling rate used by the digital controller and the radio is $T_s = 0.01s$. We set the time-headway $h=0.25$.
The deterministic $\mathcal{H}_\infty$ problem is defined with $\eps=10^{-1}$ and $r=1$, and it is solved, exploiting the continuous-time equivalence presented in \cite{Chen:1994aa} by using the continuous Riccati equation solver in Matlab. In the covariance minimization, the approximation \eqref{eq:F1approx} is used with $g=0.9734$. After lifting, the system has order 43: this figure is relatively large but compatible with the solvers we use for design.  
Each packet transmitted by one vehicle to the following one is received with probability $1-p$.
	
In order to test the ability of our controller to ensure string stability, we simulate the following dynamic scenario with time-headway $h=0.25$. 	At time zero, all vehicles start at rest. The first vehicle of the platoon is connected to a \emph{virtual} leader vehicle: this is a Simulink block that creates the trajectory to be followed by the platoon, linearly ramping up in speed from 0 to $17m/s$, then keeping this speed constant.

In this dynamic scenario, we first test our controller in the case when there are no losses ($p=0$).	
Fig.~\ref{fig:all_noloss_3} illustrates the string stable behavior that is obtained in this case: observe how the input to each vehicle becomes monotonically smaller in the downstream direction. For completeness, Fig.~\ref{fig:all_noloss_3} also shows the state variables position, speed, and acceleration.
Our tests show that the controller can handle headways at least as small as 0.2, which is consistent with the state of the art~\cite{Ploeg:2015}. 
	\begin{figure*}
		\centering
			\includegraphics[width=.96\columnwidth]{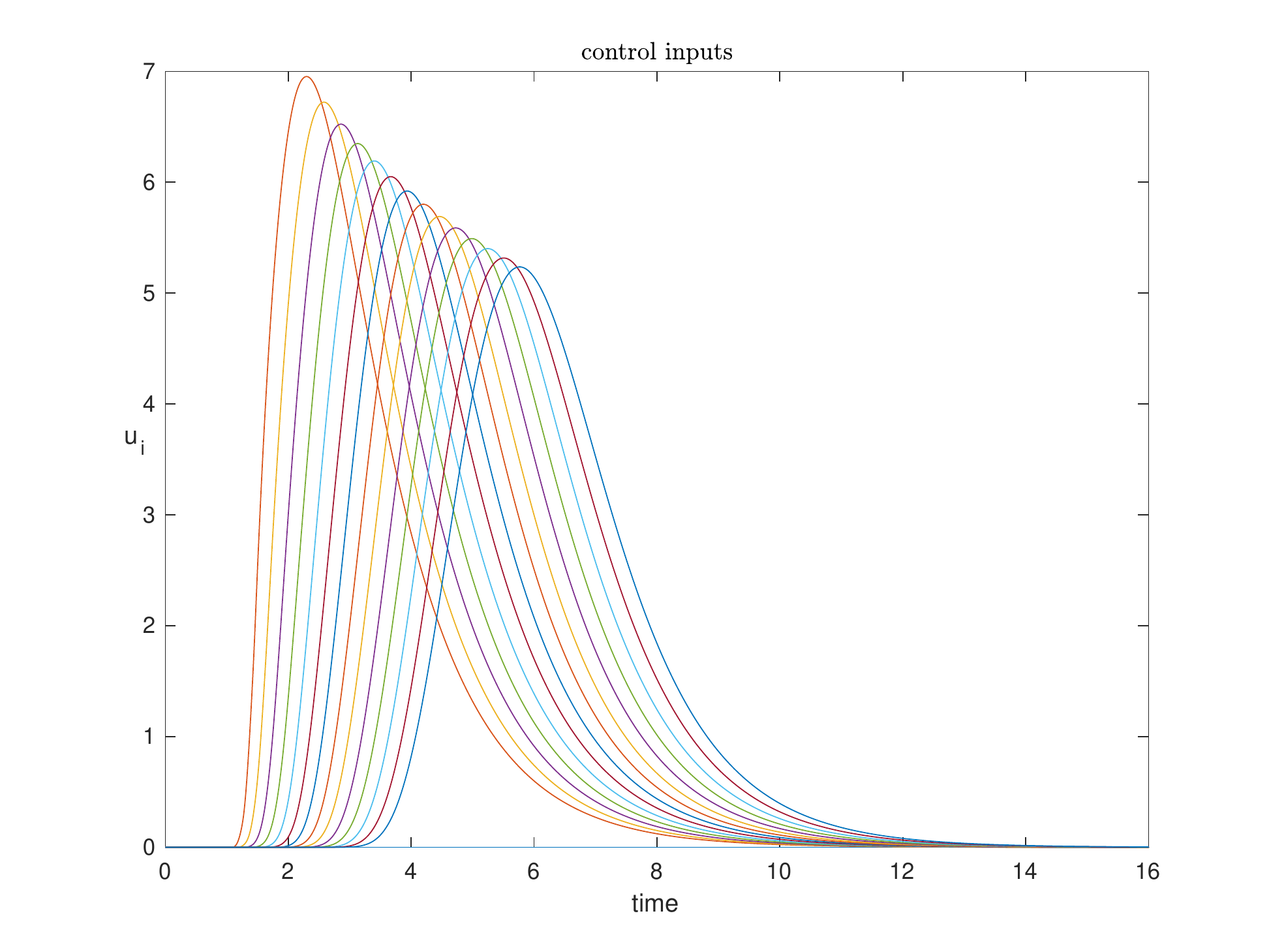}
			\includegraphics[width=.96\columnwidth]{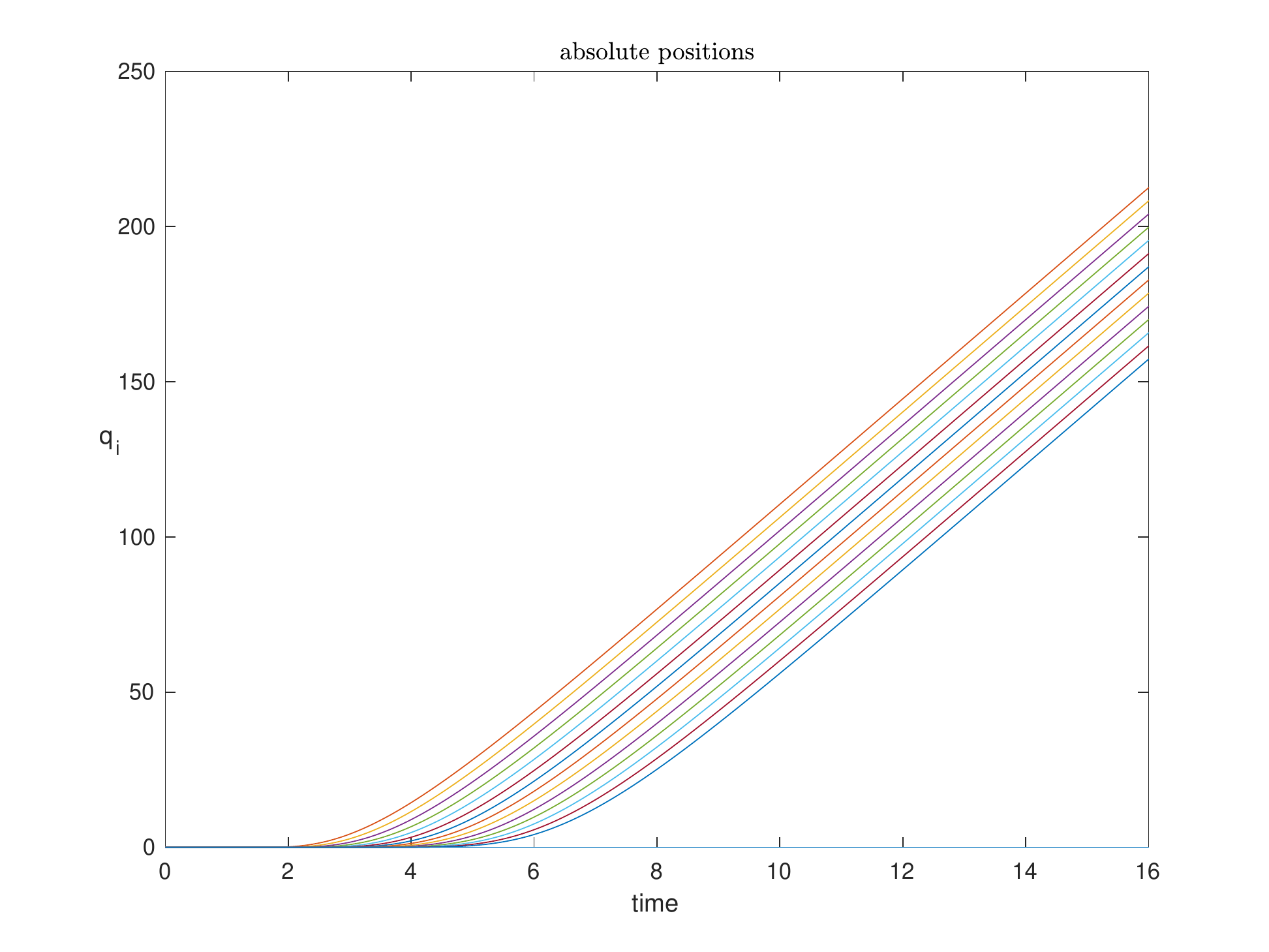}
			\includegraphics[width=.96\columnwidth]{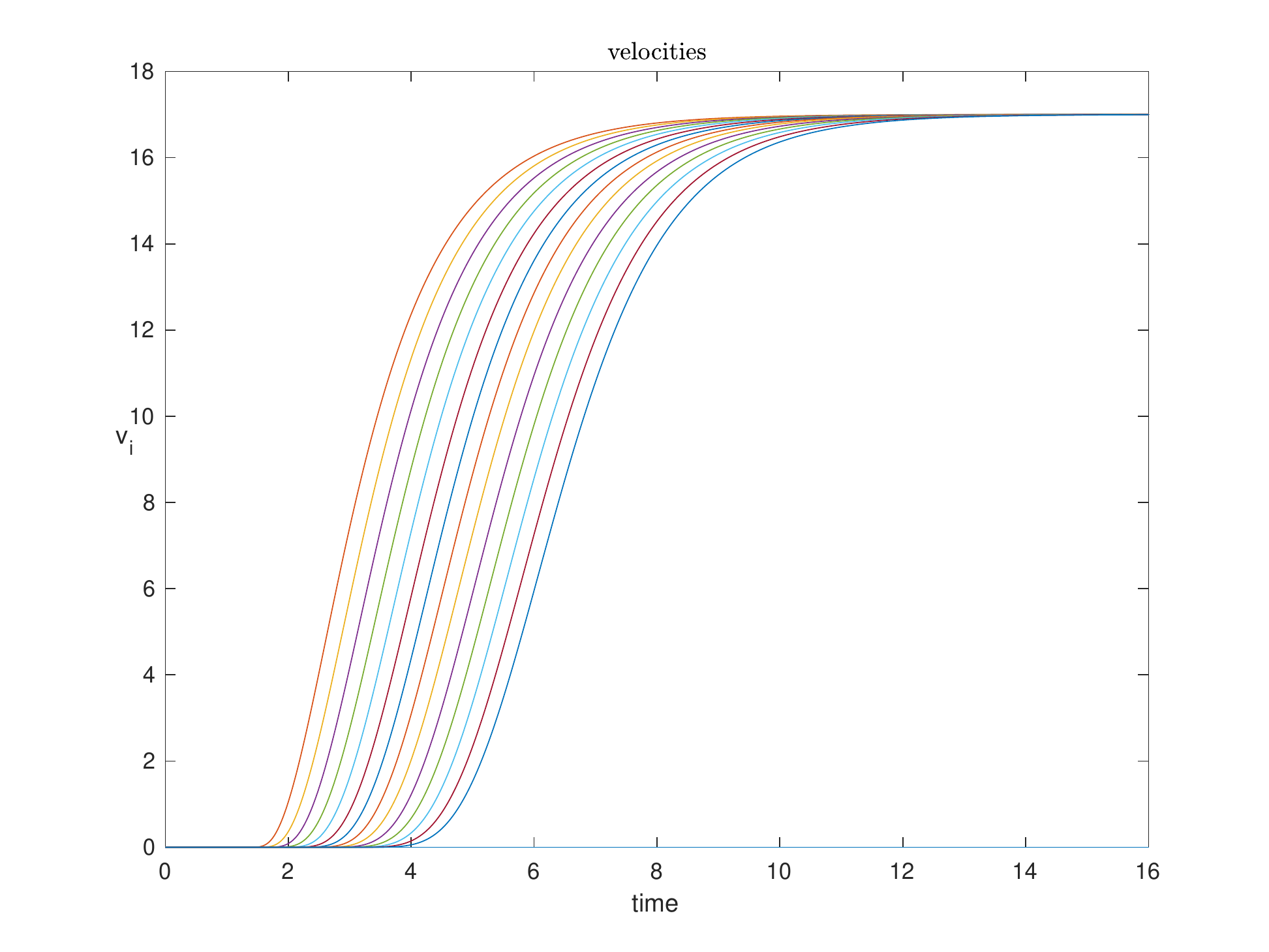}
			\includegraphics[width=.96\columnwidth]{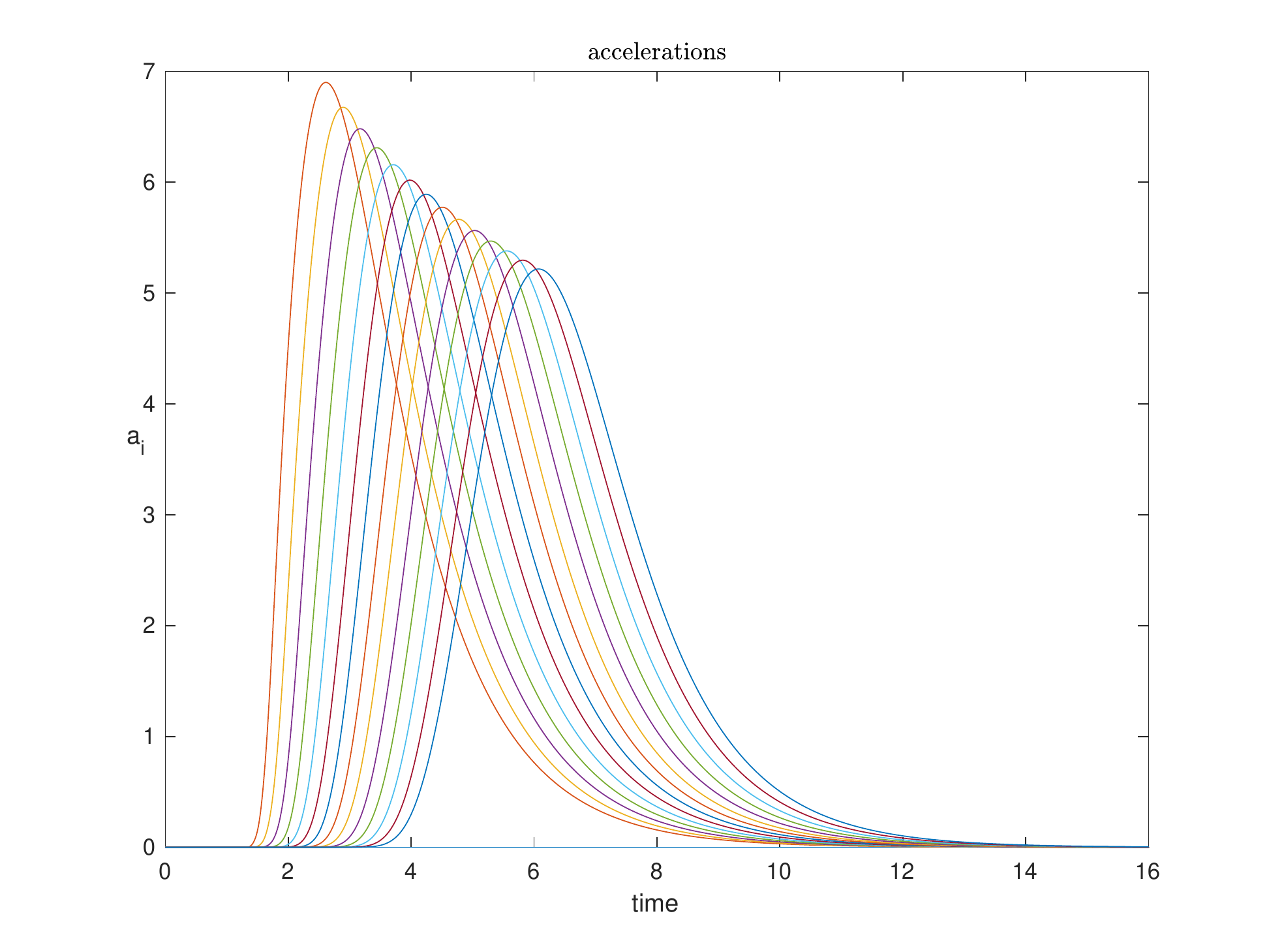}
		\caption{Time evolutions of inputs, positions, velocities and accelerations for a controlled platoon of 14 vehicles when no losses are present.}\label{fig:all_noloss_3}
	\end{figure*}

After evaluating the performance when the platoon experiences no losses, we are interested in the behaviour when each packet has a positive probability $p$ of not being received.	
The switching controller designed in the previous section shows very good ability to stabilize the platoon in the presence of losses, even for very large loss probabilities. The control inputs for a simulation with $p=0.8$ are displayed in Fig.~\ref{fig:u_p8_h25}, showing very good string stability. Performance only begins to deteriorate for $p=0.9$, as can be seen in Fig.~\ref{fig:u_p9_h25_bar}.

We want to stress that our switching architecture with covariance minimization is indeed key to achieve these good results.
%
To make this fact apparent, we have performed some simulations using controller~\eqref{eq:non-switching_control_law} where the nominal gain matrix matrices $\overline{F}$ and $\overline{L}$ are constantly used and missing values of $\nu$ are replaced by the most recent value received in the past.
In this case, except for the random delays that are induced by the packet losses, the resulting closed-loop systems would be string stable in expectation.
This property might let us hope for a good performance, but in fact the realized trajectories are far from being nicely string stable. The improvements brought by the covariance minimizing controller become evident by comparing Fig.~\ref{fig:u_p8_h25} against Fig.~\ref{fig:u_p8_h25_bar} or Fig.~\ref{fig:u_p9_h25} against Fig.~\ref{fig:u_p9_h25_bar}, respectively.

%
%
	
\begin{figure*}
		\centering
		\begin{subfigure}[t]{0.96\columnwidth}			
			\includegraphics[width=\columnwidth]{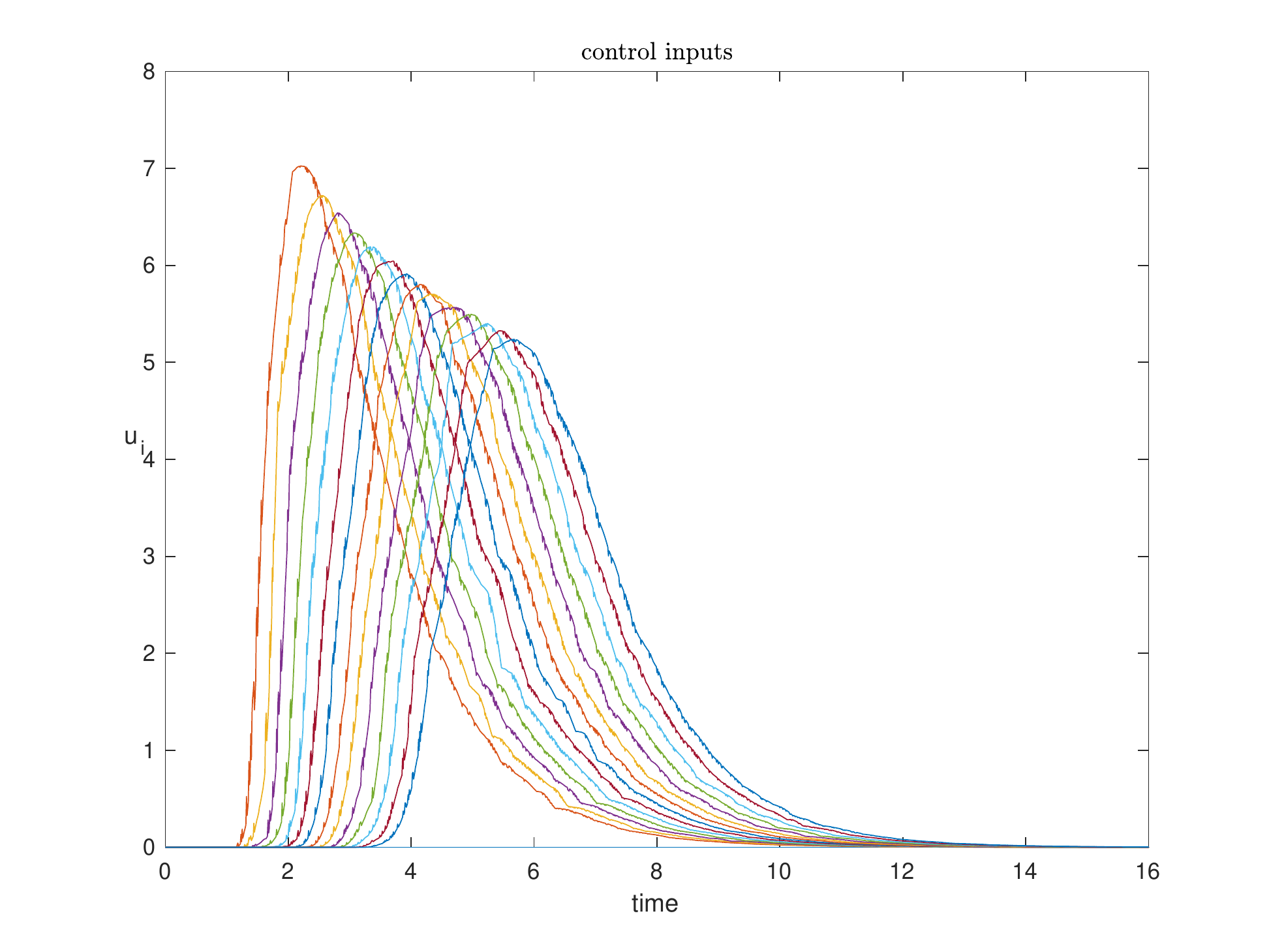}
			\caption{Minimizing covariance controller\label{fig:u_p8_h25_bar}}
		\end{subfigure}
		\begin{subfigure}[t]{.96\columnwidth}
			\includegraphics[width=\columnwidth]{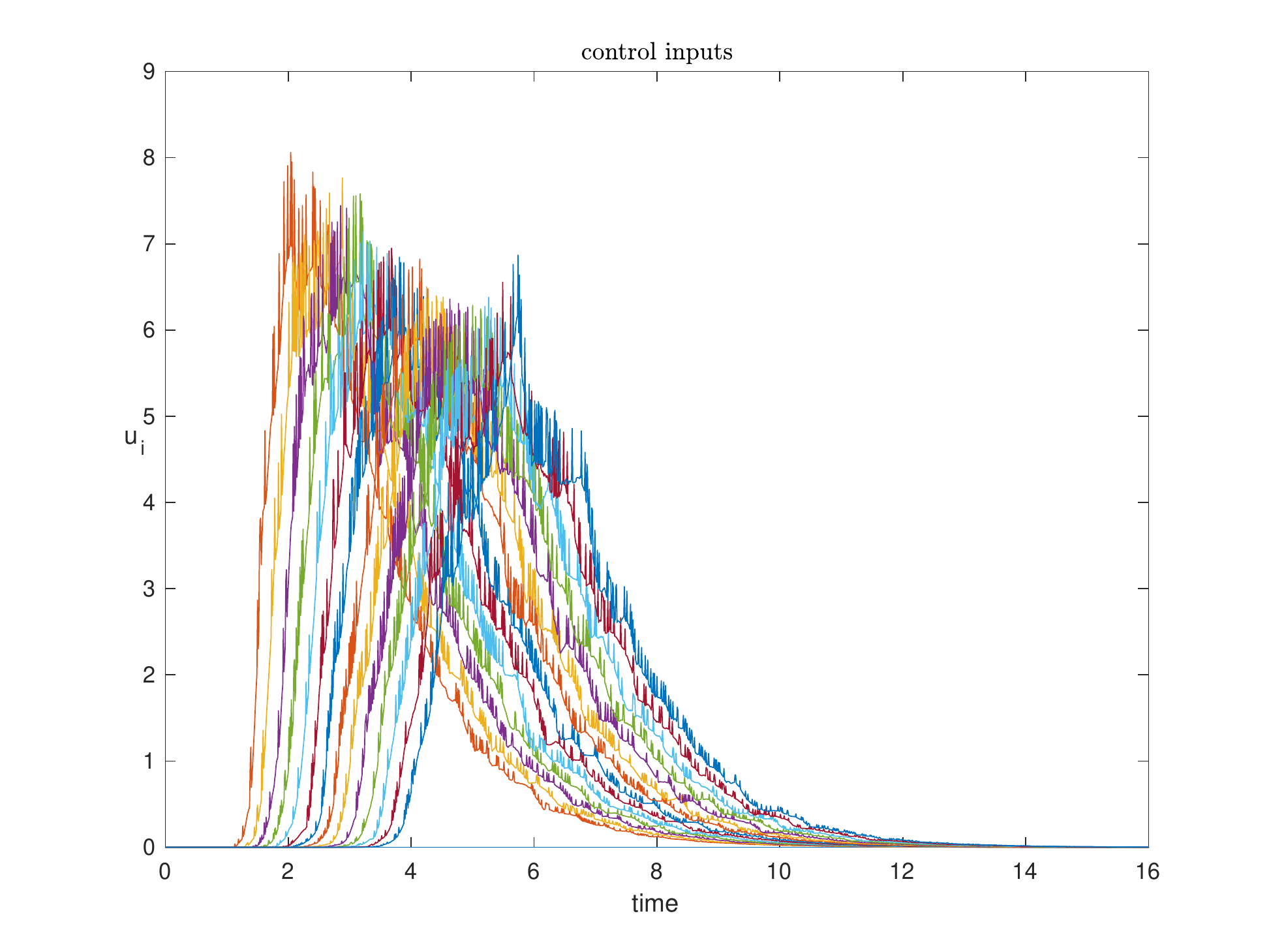}
			\caption{Non-switching controller	\label{fig:u_p8_h25}}
		\end{subfigure}
		\caption{Control inputs for a platoon of 14 vehicles with $p = 0.8$.}
	\end{figure*}
	\begin{figure*}[h]
		\centering
		\begin{subfigure}[t]{.96\columnwidth}
			\centering
			\includegraphics[width=\columnwidth]{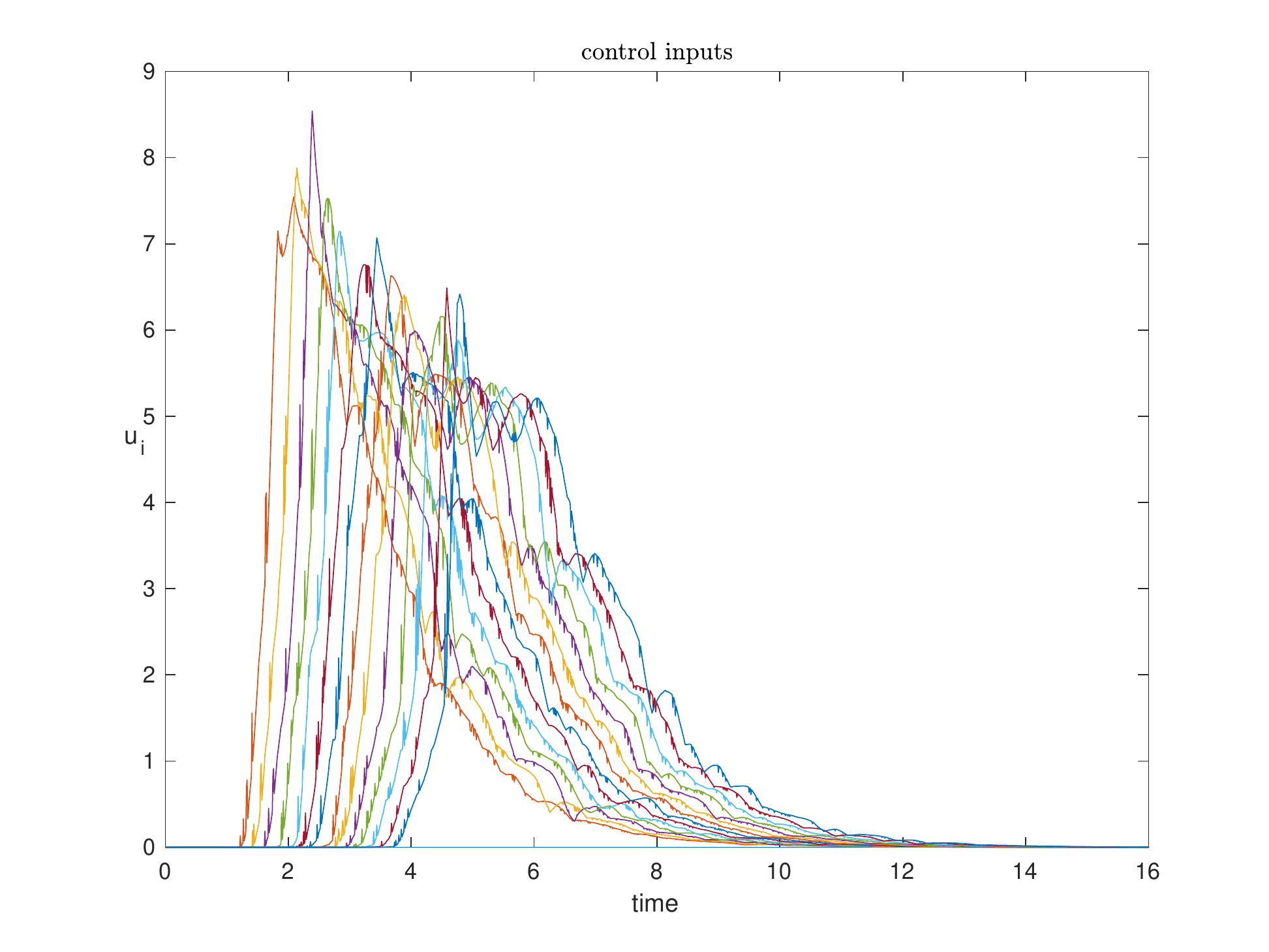}
			\caption{Minimizing covariance controller	\label{fig:u_p9_h25_bar}}
		\end{subfigure}	
		\begin{subfigure}[t]{.96\columnwidth}
			\centering
			\includegraphics[width=\columnwidth]{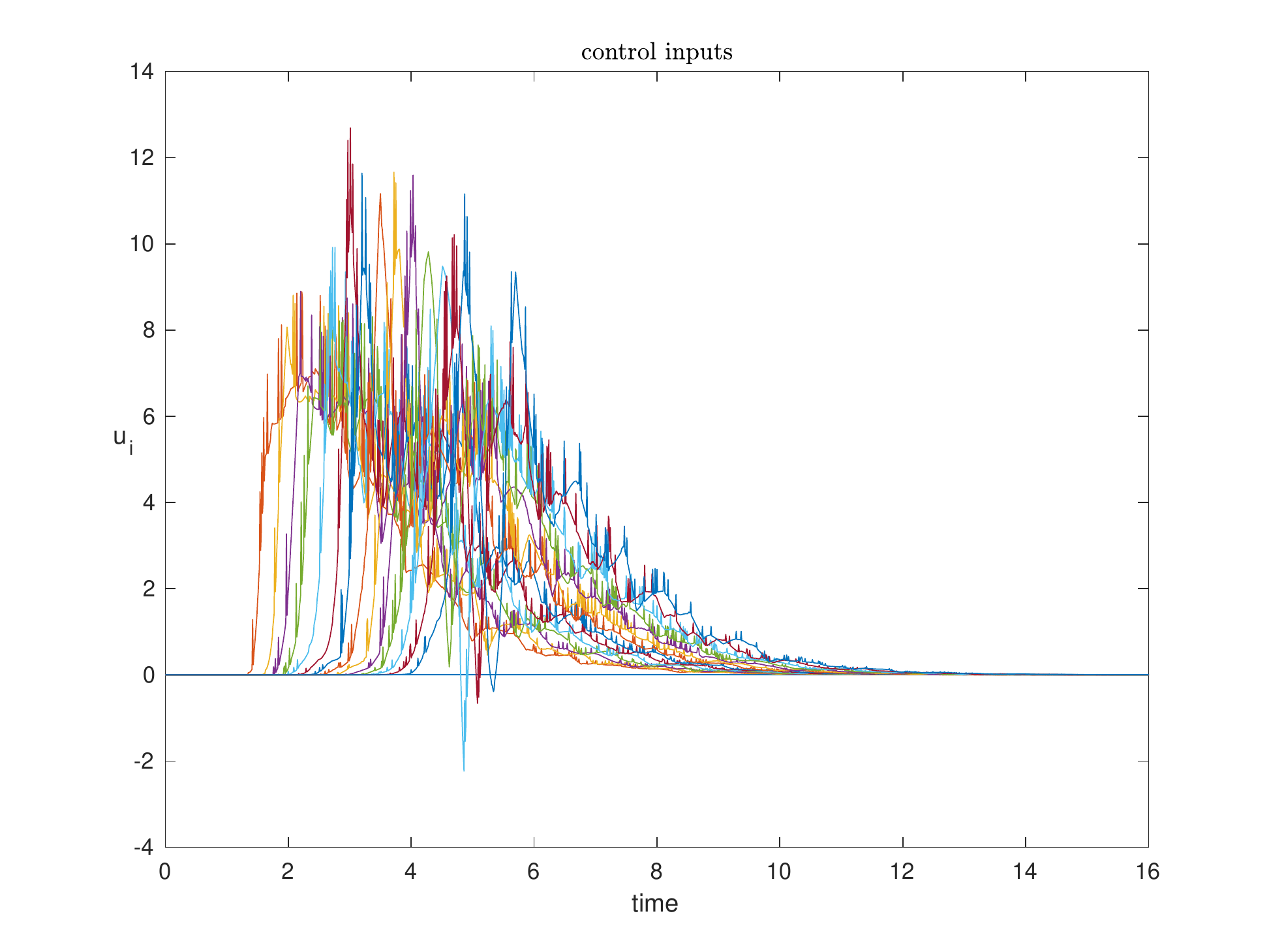}
			\caption{Non-switching controller	\label{fig:u_p9_h25}}
		\end{subfigure}
		\caption{Control inputs for a platoon of 14 vehicles with $p = 0.9$.}
	\end{figure*}

\section{Conclusion}\label{sect:outro}

In this paper, we have addressed the question of designing a distributed cooperative controller that can robustly stabilize a platoon of vehicles when the communication is affected by losses. This has been achieved by a switching controller that has been designed with a twofold objective: promoting plant stability and string stability of the average dynamics, and promoting trajectories to be close to the average. The former objective has been sought by applying $\mathcal{H}_\infty$ control tools to the average dynamics, while the latter has been sought by minimizing the variance of the trajectories.

Even though the $\mathcal{H}_\infty$ approach of minimizing the ratio~\eqref{eq:H-inf} does not (strictly speaking) guarantee that the string stability condition \eqref{eq:z-stability} is met for a given value of $h$, minimizing the ratio makes it less than one when there exists a controller in the form 
		$\xi(k) = \overline{F}x(k) + \overline{L}\nu(k)$ that is 
capable of achieving string stability. In other words, we can say that if the system without losses can be made string stable by a controller, then our controller will make it string stable.

A key point in our work has been the stochasticity of the communication and the consequent need to cope with it. In our design solution, we have chosen to focus on the first two moments: the first moment to ensure a good average behavior and the second moment to make such good behavior likely.  
In designing the control for the average, the choice of combining state feedback with an unknown input observer has two advantages. First, it keeps the $\mathcal{H}_\infty$ problem tractable; second, it provides an avenue to refine our control design. Indeed, the unknown input observer can be used to produce an estimate of the input $\nu$, which could then replace the communicated value when the latter is unavailable. This design option has been tested in a frequency domain approach in~\cite{Acciani:2018ab} with positive results. 

Thanks to minimizing the variance among the trajectories, our controller shows very good ability to cope with high levels of losses without degrading the platoon performance (in terms of the headway $h$). This features constitutes an improvement upon previous works that aimed at ``graceful degradation'' of performance in presence of losses~\cite{Ploeg:2015}.
In order to refine our way to deal with randomness, probabilistic guarantees on the set of possible trajectories would be the natural objective of future work.  In the context of analysis, relevant definitions and results have been recently given by \cite{Qin:2017aa}, with the notion of $n\sigma$ string stability (that is, that all trajectories within a neighborhood of radius $n$ times the standard deviation about the average are string stable). 

%

\ifCLASSOPTIONcaptionsoff
  \newpage
\fi



\bibliographystyle{IEEEtran}
\bibliography{../general_biblio}%

%








\end{document}